\theoremstyle{plain}
\newtheorem{theorem}{Theorem}
\newtheorem{proposition}{Proposition}
\theoremstyle{remark}
\theoremstyle{definition}
\newtheorem{definition}{Def\/inition}
\newcolumntype{P}[1]{>{\centering\arraybackslash}p{#1}}
\definecolor{light-gray}{gray}{0.88}
\definecolor{dark-gray}{gray}{0.7}
\begin{document}\onehalfspacing

\begin{titlepage}

\title{\onehalfspacing\textbf{Time is Knowledge:\\ What Response Times Reveal}}

\author{\text{Jean-Michel Benkert, Shuo Liu and Nick Netzer}\thanks{Benkert: Department of Economics, University of Bern, jean-michel.benkert@unibe.ch. Liu: Guanghua School of Management, Peking University, and Division of Social Science, New York University Abu Dhabi, shuo.liu@gsm.pku.edu.cn. Netzer: Department of Economics, University of Zurich, nick.netzer@econ.uzh.ch. We are grateful for comments by Peter Andre, Ernst Fehr, Lukas Hensel, Blaise Melly, Lasse Mononen, Costanza Naguib, Andrew Oswald, and Martin Vaeth. We also thank seminar participants at the California Institute of Technology,  Central
University of Finance and Economics, CUHK-Shenzhen, Hong Kong Baptist University, HU Berlin, NYU Abu Dhabi, Peking University, Shanghai Jiao Tong University, Shanghai University of Finance and Economics, Xiamen University, the Universities of Bamberg, Basel, Bern, Edinburgh, Essex, Freiburg, Hamburg, and Hong Kong, BSE Summer Forum 2025, CESifo Area Conference on Behavioral Economics 2025, Verein f\"ur Socialpolitik Theoretischer Ausschuss 2025, and FUR 2026. Shuo Liu acknowledges financial support from the National Natural Science Foundation of China (grants 72322006 and 72192844).}}

\date{This version: June 2026 \\ First version: August 2024}

\maketitle\thispagestyle{empty}

\begin{abstract}
Response times contain information about economically relevant but unobserved variables like willingness to pay, preference intensity, quality, or happiness. We provide a general characterization of the properties of latent variables that can be detected using response time data. Our theoretical framework unifies and generalizes existing results in the literature and gives rise to many new applications. We illustrate the novel insights that the method can deliver through three empirical applications: identifying an optimal nudge, testing decreasing marginal happiness of income, and predicting treatment heterogeneity.
\end{abstract}

\vspace*{1cm}

{\it Keywords:} chronometric effect, binary response model, non-parametric identification, nudging, decreasing marginal happiness, treatment heterogeneity.

\vspace*{.15cm}

{\it JEL Classification:} C14, D60, D91, I31

\end{titlepage}

\section{Introduction}\label{sec.intro}

Traditionally, economists have ignored choice process data like response times. Only recently has the literature realized that response times contain valuable information about unobserved variables like willingness to pay \citep{krajbich2012attentional,cotet2025deliberation}, preference intensity \citep{Chabrisetal09,AFN18,AG22}, quality assessments \citep{card23}, or subjective happiness \citep{liunetzer23}.

In this paper, we take a systematic approach to understanding what information response times contain. We study the identification of distributional properties of latent variables in canonical binary response models, a framework extensively used in economics and applicable to all the settings described above. Our approach builds on the observation that decisions are faster when the absolute value of the latent variable is larger. The observable response times are therefore informative about the unobservable latent variable. As our main theoretical result, we provide a full characterization of the distributional properties of latent variables that can be detected using response time data. Our characterization highlights how different assumptions about the relationship between the latent variable and response times determine what properties can be identified. 

To illustrate the usefulness of the method, we study three empirical applications where a diverse range of distributional properties matters. First, for nudging, the true preference distribution must be identified from choices that are distorted by frames. Second, in the context of happiness research, the curvature of the income-happiness relation is of particular interest. Third, predicting heterogeneous treatment effects requires detecting systematic group differences in latent distributions. In all three cases, we show that conventional methods of analysis often suffer from identification problems, but that the use of response time data with our method offers a viable solution.

The central assumption underlying our analysis is the so-called \textit{chronometric function} that associates each choice with a response time. This function is monotone, in the sense that a larger absolute value of the latent variable generates a faster decision, possibly after controlling for individual heterogeneity. From a theoretical perspective, such a relationship emerges naturally in evidence-accumulation models \cite[see][]{Chabrisetal09,fudenberg2018speed,card23}, where a stronger stimulus generates faster decisions. The empirical evidence for a monotone chronometric function in the laboratory is vast. Among many others, \cite{kellogg1931}, \cite{moyerbayer1976}, and \cite{palmer2005} document the effect for choice situations with an objective stimulus, and \cite{Moffatt05}, \cite{Chabrisetal09}, \cite{konovalov2019revealed}, and \cite{AG22} for subjective value-based environments. Field evidence is also emerging. \citet{card23} document that editorial decisions take longer when the submitted paper's quality implies a closer decision. Using eBay data on bargaining behavior, \citet{cotet2025deliberation} show that a seller's response time to an offer systematically depends on its perceived value. In the context of an online survey, \citet{liunetzer23} demonstrate that faster responses are associated with a stronger sense of approval for the selected answer.

To understand our main insight, consider a decision-making environment where one or multiple agents choose between two options. Choice is determined by the realization $x$ of an unobservable latent variable, with $x \leq 0$ generating choice of option $i=0$ and $x>0$ generating choice of option $i=1$. An analyst observing the choices wants to learn about the underlying cumulative distribution function $G$ of the latent variable. Unfortunately, the only property of $G$ that is identified without additional assumptions or data is its value at zero, $G(0)$, which is given by the observed probability or frequency of choosing $i=0$. This information is extremely limited and does, for example, not imply anything about the mean of $G$ without additional distributional assumptions. Now suppose that the speed of the decision is given by $c(|x|)$ for a strictly decreasing chronometric function $c$, assumed here to be identical for both choice options and all subjects just for ease of exposition. Since a choice of $i=0$ arises at time $t$ or earlier if $x$ is sufficiently far below zero, where ``sufficiently far'' is determined by the chronometric function, the observed probability or frequency of choosing $i=0$ at time $t$ or earlier pins down the value of $G(-c^{-1}(t))$, and analogously for choices of $i=1$. Observing the joint distribution of responses and response times therefore allows the analyst to identify a composition of the distribution $G$ and the (inverse of the) chronometric function $c$.
 
Consequently, if the analyst had perfect knowledge of the chronometric function linking values to response times, she could recover the entire latent distribution. Such detailed knowledge is not necessary for inferring only specific distributional properties. Our main result characterizes which properties (or their violations) can be detected under which assumptions on the chronometric function. For example, detecting properties that are preserved under monotone transformations requires only knowledge of monotonicity of the chronometric function. If we further restrict attention to chronometric functions that are identical for both choice options, as in the above illustration, we can detect properties that are preserved under symmetric monotone transformations, and analogously for other classes of transformations.

Our result also provides a simple recipe how to detect or reject any property of interest. It involves constructing a candidate distribution based on the empirical data using a representative chronometric function that the analyst deems possible. If the property of interest holds for this candidate distribution, it must hold for all chronometric functions that can be obtained from the representative one using any transformation under which the property is preserved. We discuss an extension that combines this approach with distributional assumptions. Additionally, we show how to incorporate individual heterogeneity and noise into the analysis, and we provide necessary and sufficient conditions for the rationalizability of response time data.

The existing econometrics literature has studied identification of binary response models through assumptions on the distribution of the latent variable and exogenous variation of observables \citep[e.g.,][]{Manski88, matzkin1992nonparametric}. Some of the assumptions required to achieve identification have been criticized \citep[e.g.,][]{haile2008empirical, BL19}. Our method based on response times is complementary to that literature and allows us to avoid some of the controversial assumptions. In the context of stochastic choice theory, \citet{AFN18} have shown that response time data can be used to obtain revealed preferences without making distributional assumptions about the random utility component and to improve out-of-sample predictions.  In the context of happiness surveys, \cite{liunetzer23} have shown that response times can be used to test distributional assumptions of conventional econometric models. Several of the results in these two papers follow as corollaries from our characterization here and can be generalized.

Our approach lends itself to a wide range of applications. We focus on three applications in the paper. For each of them, we use our general theoretical result to derive conditions that allow us to detect or reject distributional properties which are of interest in that specific application. We then apply these conditions to existing data and evaluate their performance.

Our first application concerns the problem of selecting an optimal nudge \citep{thalersun08,benkert18}. We assume that a choice architect must select between two frames, for example defaults, each distorting choice towards one of the two options. The distorted choices are not directly informative about the shares of agents who truly prefer each option. However, under the assumption that the truly indifferent agent is distorted symmetrically under the two frames, response time data allow us to detect the true preference shares and to select the optimal nudge. We apply this result to data from \citet{serragarcia23}, where the existence of a neutral frame allows us to validate the method. Our prediction is strikingly accurate and outperforms alternative methods proposed in the literature, both in the original data and in synthetic resamples that we create to investigate the variability of the predictions.

In our second application, we test the hypothesis of decreasing marginal happiness of income, a principle central to redistributive policies. \citet{Oswald08} and \citet{KaiserOswald22} question the empirical foundation of this principle by arguing that an observed concave relationship between income and self-reported happiness may result from a concave reporting function rather than from decreasing marginal happiness. Conventional approaches used in the happiness literature are insufficient to establish the principle \citep{BL19}. We show that the principle becomes testable with the response time-based method, using conditions for detecting the ranking of the means of distributions. Our analysis of survey data by \citet{liunetzer23} reveals that the hypothesis of decreasing marginal happiness cannot be rejected (while the hypothesis of increasing marginal happiness is clearly rejected).

Our third application concerns heterogeneous treatment effects \cite[e.g.,][]{Manski2004Statistical}. Heterogeneity is relevant for policymakers who need to decide which subpopulation to target with a treatment, for firms who want to set different prices in different markets, and for experimental economists who are interested in bounds on treatment effects. The existing literature typically requires post-treatment data to predict which subgroup of a population will respond more strongly to a treatment \citep[e.g.,][]{Athey2016PNAS}. We show that heterogeneity can be predicted even pre-treatment, because subgroups with a larger near-indifferent mass of agents will respond more strongly to a treatment and this property can be detected using response time data. Using the data of \citet{Krefeld2024PNAS}, we validate our approach by showing that we correctly predict the overwhelming majority of subgroup comparisons.

The three applications are merely illustrations of the scope of the method and the value of using response times in economics. Going beyond these applications, our approach could be used to detect polarization of political attitudes from ordinal survey questions \citep{lelkes2016,vaeth23}, infer properties of demand functions that are important for pricing from purchase decisions at a single price \citep{johnson2006simple}, and uncover correlations that are not directly observable when subjects' responses are uninformative because they conflict with social norms \citep{coffman2017size,konovalov2019revealed}. As a template for future work, Appendix \ref{sec.theory.app} contains a collection of distributional properties that play a role in these contexts and, for each of them, derives both the identifying response time conditions and the required assumptions on the chronometric function.

The paper is organized as follows. Section \ref{sec.theory} introduces the framework and derives our main theoretical result, followed by extensions. Section \ref{sec.app} presents the three applications. Section \ref{sec.concl} concludes. Supporting materials are provided in the appendices.

\section{Theory}\label{sec.theory}

In this section, we develop our general theoretical framework and present our main result, which shows how and under which conditions distributional properties can be detected using response time data. 

\subsection{Binary Response Model}

We first introduce the binary response model \cite[e.g.,][]{Manski88}. There is a random variable $\tilde{x}$ with values $x \in \mathbb{R}$ that induce binary responses by comparison with a decision threshold. We normalize the threshold to zero without loss of generality. Thus, the response is $i=0$ if $\tilde{x}$ takes a value $x \leq 0$ and $i=1$ if $\tilde{x}$ takes a value $x > 0$. We describe the distribution of the latent variable $\tilde{x}$ by a cumulative distribution function (cdf) $G$, which we assume to be continuous. It follows that the probabilities of the two responses are $p^0=G(0)$ and $p^1=1-G(0)$.

The model has different applications and interpretations. The latent variable could be a random utility difference $\tilde{x}= [u(1) + \tilde{\epsilon}(1)] - [u(0) + \tilde{\epsilon}(0)]$ between two options, inducing stochastic choices of a single agent \citep[as in][]{AFN18}. In another application, $\tilde{x}$ could describe the difference between the willingness to pay $\tilde{v}$ for a product among consumers and the product price $p$, $\tilde{x}=\tilde{v}-p$, inducing the demand for the product at that price \citep[in the spirit of][]{cotet2025deliberation}. In yet another application, $\tilde{x}$ could capture the distribution of happiness in a population of agents, inducing frequencies of responses to a binary survey question about happiness \citep[as in][]{liunetzer23}. The same logic applies to other survey questions, where the responses could be driven by a distribution of political or moral attitudes or preference parameters in the population. The latent variable could also describe the random quality of papers that are submitted to a journal, inducing the editor's decision to accept or reject \cite[as in][]{card23}.

We now follow \cite{AFN18} and \cite{liunetzer23} and assume that the realized value $x$ of $\tilde{x}$ not only determines the response but also the response time, with larger absolute values implying faster responses, in line with the chronometric effect.\footnote{There exists a literature in psychology that introduces response times into response models in a similar way \citep[e.g.,][]{takane83,ferrando07a,ferrando07b,ranger13}. However, that literature is interested in improving parameter estimation using response time data, rather than detecting distributional properties, and therefore imposes parametric functional forms and makes the conventional distributional assumptions.} Formally, we denote by $c: \mathbb{R} \rightarrow [\underline{t},\overline{t}]$ the chronometric function, where $0 \leq \underline{t} < \overline{t} < \infty$. The function $c$ maps each realized value $x$ into a response time $c(x)$. It is assumed to be continuous, strictly increasing on $\mathbb{R}_-$ and strictly decreasing on $\mathbb{R}_+$ whenever $c(x)>\underline{t}$, and to satisfy $c(0)=\overline{t}$ and $\lim_{x \rightarrow -\infty} c(x)=\lim_{x \rightarrow +\infty} c(x)=\underline{t}$. Figure \ref{fig:chronometric_function} illustrates two examples of chronometric functions that adhere to all these conditions.

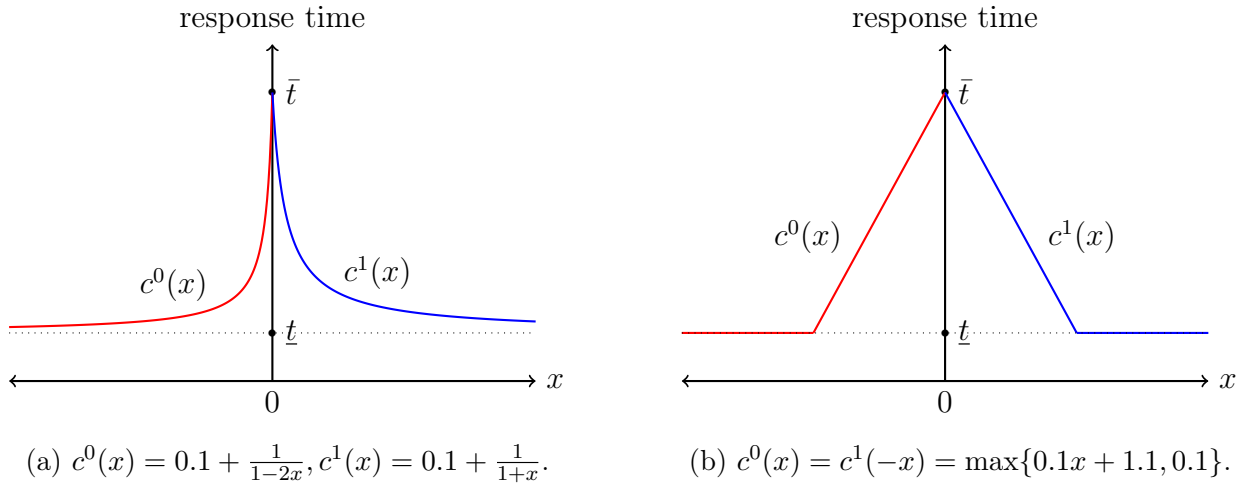
\begin{figure}
\centering
\begin{subfigure}{0.45\textwidth}
     \begin{tikzpicture}[yscale=3.2, xscale=0.175]
     \node at (0, 0.1) {\tiny $\bullet$};
     \node at (0, 1.1) {\tiny $\bullet$};
            \node[right] at (0.2, 0.1) {$\underline{t}$};
            \node[right] at (0.2, 1.1) {$\bar{t}$};
            \draw[thick] (-0.2, 0.1) -- (0.2, 0.1); 
            \draw[thick] (-0.2, 1.1) -- (0.2, 1.1); 
            \node[below] at (0, -0.1) {$0$};
            \node[right] at (20, -0.1) {$x$};
            \node[above] at (0, 1.3) {response time};
            \draw [thick, <->](-20,-0.1) -- (20, -0.1);
            \draw [thick, ->](0,-0.1) -- (0, 1.3);
            \draw[thick, red, domain=-20:0, samples=200] plot (\x, {0.1 + 1/(-2*\x +1)});
            \draw[thick, blue, domain=0:20, samples=200] plot (\x, {0.1 + 1/(\x +1)});
            \draw[dotted, domain = -20:20] plot (\x, 0.1);
            \node at (-7.5, 0.3) {$c^0(x)$};
            \node at (8, 0.35) {$c^1(x)$};
            \end{tikzpicture}
    \caption{$c^0(x)=0.1+\frac{1}{1-2x}, c^1(x)=0.1+\frac{1}{1+x}$.}
    \label{fig:asymptotic}
\end{subfigure}
\hfill
\begin{subfigure}{0.45\textwidth}
     \begin{tikzpicture}[yscale=3.2, xscale=0.175]
           \node at (0, 0.1) {\tiny $\bullet$};
     \node at (0, 1.1) {\tiny $\bullet$};
            \node[right] at (0.2, 0.1) {$\underline{t}$};
            \node[right] at (0.2, 1.1) {$\bar{t}$};
            \draw[thick] (-0.2, 0.1) -- (0.2, 0.1); 
            \draw[thick] (-0.2, 1.1) -- (0.2, 1.1); 
            \node[below] at (0, -0.1) {$0$};
            \node[right] at (20, -0.1) {$x$};
            \node[above] at (0, 1.3) {response time};
            \draw [thick, <->](-20,-0.1) -- (20, -0.1);
            \draw [thick, ->](0,-0.1) -- (0, 1.3);
            \draw[thick, red, domain=-10:0, samples=200] plot (\x, {(1/10)*\x+1.1});
            \draw[thick, blue, domain=0:10, samples=200] plot (\x, {-(1/10)*\x+1.1});
            \draw[thick, blue, domain = 10:20] plot (\x, 0.1);
            \draw[thick, red, domain = -10:-20] plot (\x, 0.1);
            \draw[dotted, domain = -20:20] plot (\x, 0.1);
            \node[left] at (-7, 0.5) {$c^0(x)$};
            \node[right] at (7, 0.5) {$c^1(x)$};
            \end{tikzpicture}
    \caption{$c^0(x)=c^1(-x)=\max\{0.1x+1.1, 0.1\}$.}
    \label{fig:finite}
\end{subfigure}
\vspace{2mm}
   \caption{Examples of chronometric functions.}
   \vspace{3mm}
   \begin{minipage}{\textwidth}
   \small\textit{Notes:} The left panel in the figure depicts an asymmetric chronometric function that asymptotically approaches the fastest response time $\underline{t}$, while the right panel shows a symmetric one that  attains $\underline{t}$ at finite absolute values of the latent variable. In both panels, the red and blue curves correspond to the restrictions of $c$ to $\mathbb{R}_{-}$ and $\mathbb{R}_+$, respectively.
  \end{minipage}
\label{fig:chronometric_function}
  \end{figure}

The restriction of $c$ to $x \in \mathbb{R}_-$ is denoted $c^0$. This function $c^0$ has a well-defined inverse $(c^0)^{-1}: (\underline{t},\overline{t}] \rightarrow \mathbb{R}_-$ that is continuous and strictly increasing. We extend it to $\underline{t}$ by setting $(c^0)^{-1}(\underline{t})=-\infty$ if $c(x)>\underline{t}$ for all $x \in \mathbb{R}_-$ and $(c^0)^{-1}(\underline{t})= \max \{x \in \mathbb{R}_- \mid c(x)=\underline{t} \}$ otherwise. Analogously, the restriction of $c$ to $x \in \mathbb{R}_{+}$ is denoted $c^1$, with a continuous and strictly decreasing inverse $(c^1)^{-1}: (\underline{t},\overline{t}] \rightarrow \mathbb{R}_{+}$, which we extend to $\underline{t}$ by setting $(c^1)^{-1}(\underline{t})=+\infty$ or $(c^1)^{-1}(\underline{t})= \min \{x \in \mathbb{R}_{+} \mid c(x)=\underline{t} \}$, as appropriate.

In addition to response probabilities, the model $(G,c)$ induces distributions of response times. We denote by $F^i$ the cdf of response times conditional on a response of $i=0,1$. Since a response $i=0$ at time $t$ or earlier arises if $x \leq (c^0)^{-1}(t)$, we obtain that
\begin{equation}\label{gen0} p^0 F^0(t)=G((c^0)^{-1}(t))\end{equation}
for all $t \in [\underline{t},\overline{t}]$, where we use the convention $G(-\infty)=0$. Analogously, a response $i=1$ at time $t$ or earlier arises if $x \geq (c^1)^{-1}(t)$, so that
\begin{equation} \label{gen1} p^1 F^1(t)=1 - G((c^1)^{-1}(t))\end{equation}
for all $t \in [\underline{t},\overline{t}]$, where we use $G(+\infty)=1$. The induced response-time cdfs $F^i$ are continuous on $[\underline{t},\overline{t}]$ and satisfy $F^i(\overline{t})=1$.

In summary, the binary response model $(G, c)$ induces the data $(p,F)=(p^0,p^1,F^0,F^1)$ according to (\ref{gen0}) and (\ref{gen1}).  

\subsection{Detecting Properties}

We now ask what an analyst can learn from observed data about the underlying binary response model and, in particular, about the distribution $G$ of the latent variable. Taking observed data as given, different binary response models could have generated those data, so inference about the model is not straightforward. This is true especially if the analyst is not willing to make potentially strong assumptions about the form of the chronometric function or the latent distribution. We ask whether there are some properties that all models satisfy which are consistent with the data.

Consider a profile of data $(p_j,F_j)_{j} =(p^0_j,p^1_j,F^0_j,F^1_j)_j$ indexed by $j \in J$. The set $J$ could be a singleton, e.g., when studying the choices of a single agent between two options or the responses of a single group of agents to one binary survey question. In this case, we typically omit the index $j$. If we observe the choices of a single agent for multiple pairs of options, then the set $J$ would describe the different binary choice problems. Alternatively, the index could capture different agents or combinations of agents and choice problems. In a survey application, the index could describe different questions or different demographic groups. Since $J$ is not necessarily finite, it could also model income levels $j \in J\subseteq\mathbb{R}_+$ of survey participants. In the other applications discussed earlier, the index could describe different journal editors or different prices at which market demand is observed. We assume that each $F^i_j$ is continuous and satisfies $F^i_j(\underline{t})=0$ and $F^i_j(\overline{t})=1$.\footnote{Even though our approach is very general, it does not allow describing correlations between choices or response times across different indices $j$. It also does not allow mass points in the response time distributions. We leave such extensions for future work.}

Denote by $\mathscr{G}$ the set of all possible profiles $(G_j)_j$ of cdfs, where each individual $G_j$ satisfies our assumptions from the previous subsection. Similarly, denote by $\mathscr{C}$ the set of all possible profiles $(c_j)_j$ of chronometric functions that individually satisfy our previous assumptions. For our main result, we leave $\mathscr{G}$ unrestricted but allow for a possibly restricted set $\mathscr{C}^* \subseteq \mathscr{C}$ of admissible chronometric functions. For example, each profile $(c_j)_j \in \mathscr{C}^*$ may have to satisfy that all functions are symmetric across responses (i.e., $c_j^0(-x)=c_j^1(x)$ for all $x \in \mathbb{R}_+$). This restriction embodies the assumption that the chronometric effect is the same for the two choice options. Another possible restriction would be that the functions are identical across indices (i.e., $c_j=c$ for all $j \in J$), reflecting the assumption that the chronometric effect is the same for all groups $j$. Other restrictions could be functional forms such as piece-wise linearity, or combinations of multiple of these assumptions.

\begin{definition}\label{defdet}
Given data $(p_j,F_j)_{j}$ and a set $\mathscr{C}^*$ of admissible chronometric functions, a property $\mathbf{P}$ of the distributions $(G_j)_j$ is \textit{detected} if all $((G_j)_j,(c_j)_j) \in \mathscr{G} \times \mathscr{C}^*$ that induce $(p_j,F_j)_{j}$---in the sense of (\ref{gen0}) and (\ref{gen1}) for all $j \in J$ ---have in common that $(G_j)_j$ satisfies $\mathbf{P}$.
\end{definition}

Detection formalizes that property $\mathbf{P}$ is inferred strictly from the data instead of being derived based on assumptions by the analyst. It corresponds to a standard notion of non-parametric identification in econometrics \citep[e.g.,][]{Manski88,BL19}. The revealed preference approach in choice theory \citep{samuelson38,arrow59} embodies the same logic \citep[see, e.g.,][]{benkert18,AFN18}.\footnote{We assume here implicitly that there exists at least one $((G_j)_j,(c_j)_j) \in \mathscr{G} \times \mathscr{C}^*$ that induces the data. As we will see later, this is always the case when $\mathscr{G}$ is unrestricted. In Subsection \ref{subsec.dconstr} we will study the case with restricted $\mathscr{G}^* \subseteq \mathscr{G}$ and provide conditions for rationalizability of the data.}

The ability to detect a property $\mathbf{P}$ will depend on the extent to which $\mathbf{P}$ is invariant to transformations. As a first illustrative example, consider the asymmetry property of a single cdf $G$ defined by
\begin{equation} \label{TWTasym}
G(-x) \leq 1- G(x) \; \text{for all} \; x \in \mathbb{R}_+,
\end{equation}
which implies that the mean of $G$ is positive. This property is invariant to strictly increasing transformations $\psi$ that are symmetric around zero, as it implies $G(\psi(-x)) \leq 1- G(\psi(x))$ whenever $\psi(x) = - \psi(-x)$ for all $x \in \mathbb{R}_+$. As a second example, consider the property that $G_1$ first-order stochastically dominates $G_2$ defined by 
\begin{equation} \label{HTfosd}
G_1(x) \leq G_2(x) \; \text{for all} \; x \in \mathbb{R}.
\end{equation}
It is invariant to profiles $(\psi_1,\psi_2)$ of strictly increasing transformations that are identical for the two distributions, as it implies $G_1(\psi_1(x)) \leq G_2(\psi_2(x))$ when $\psi_1(x)=\psi_2(x)$ for all $x \in \mathbb{R}$.

Generally, let $\Psi$ be a set of profiles $(\psi_j)_j$ of functions $\psi_j: \mathbb{R} \rightarrow \mathbb{R}$ that are bijective and strictly increasing, hence continuous. The set $\Psi$ can embody various constraints, such as the restriction that each profile $(\psi_j)_j \in \Psi$ is composed only of identical and/or symmetric functions. For any $(G_j)_j \in \mathscr{G}$ and any $(\psi_j)_j \in \Psi$, the composition $(G_j \circ \psi_j)_j$ is another profile of cdfs in $\mathscr{G}$. 

\begin{definition}\label{definvar}
A property $\mathbf{P}$ of $(G_j)_j$ is \textit{invariant to transformations $\Psi$} if $(G_j \circ \psi_j)_j$ also has property $\mathbf{P}$, for all $(\psi_j)_j \in \Psi$.
\end{definition}

Classifying properties by their invariance to transformations is common in the theory of measurement \citep{measurementbook} and parallels an approach used in social choice theory to describe the measurability and interpersonal comparability of utilities that different welfare functions require \citep{daspre02}.

\subsection{Generating Chronometric Functions}

We consider sets $\mathscr{C}^*$ of chronometric functions that are generated by a representative profile $(c_j^*)_j \in \mathscr{C}$ and a set of transformations $\Psi$ as introduced above. 

\begin{definition}\label{defgen}
The pair $((c_j^*)_j,\Psi)$ \textit{generates $\mathscr{C}^*$} if for each $(c_j)_j \in \mathscr{C}^*$ there exists $(\psi_j)_j \in \Psi$ such that $(c_j)_j = (c_j^* \circ \psi_j)_j$. \end{definition}

Note that only transformations $\psi_j$ that satisfy $\psi_j(0)=0$ yield well-defined chronometric functions, while $\Psi$ could contain functions without that property. More generally, the definition does not require that all elements of $\Psi$ must be used for the construction of $\mathscr{C}^*$.

To illustrate the concept, we discuss examples of sets $\mathscr{C}^*$ and how they can be generated. Consider the set of all profiles of chronometric functions which approach $\underline{t}$ asymptotically in the limit but never reach $\underline{t}$. This set is generated by a simple representative member $(c_j^*)_j$, for example the symmetric hyperbolic form
\begin{equation} \label{ci} c_j^*(x)=\underline{t} + \frac{1}{|x| + 1/(\overline{t}-\underline{t})},
\end{equation}
identical for each $j \in J$, together with the unrestricted set of all profiles of transformations. To see why, just note that any desired $(c_j)_j$ can be obtained from $(c_j^*)_j$ by using the transformations $(\psi_j)_j$ given by
\[\psi_j(x)=\begin{cases} (c^{*,1}_j)^{-1}(c_j(x)) & \text{if } x>0, \\ (c^{*,0}_j)^{-1}(c_j(x)) & \text{if } x\leq 0,\end{cases}\]
for each $j \in J$. Similarly, the set of all profiles of chronometric functions which have $c_j(x)=\underline{t}$ for large but finite absolute values of $x$ can be generated by a simple representative member, for example the symmetric linear form \begin{equation} \label{cf} c^*_j(x) =\begin{cases} \underline{t} & \text{if } (\overline{t}-\underline{t}) < x,\\ \overline{t} - x & \text{if } 0 < x \leq (\overline{t}-\underline{t}), \\ \overline{t} + x & \text{if } -(\overline{t}-\underline{t}) \leq x \leq 0,\\ \underline{t} & \text{if } x < -(\overline{t}-\underline{t}), \\ \end{cases}\end{equation} identical for each $j \in J$, together with the unrestricted set of all profiles of transformations.\footnote{An example of a set $\mathscr{C}^*$ which cannot be generated by any $((c_j^*)_j,\Psi)$ is one which contains some profiles in which $c_j$ reaches $c_j(x)=\underline{t}$ and other profiles in which $c_j(x)>\underline{t}$ throughout, for some $j \in J$. However, this can be addressed by our generalization in Subsection \ref{subsec.mutl} which allows for the generation of sets by more than one representative profile $(c_j^*)_j$.}

If we combine either (\ref{ci}) or (\ref{cf}) with the smaller set of transformations that are symmetric around zero, we can generate the respective sets of profiles of chronometric functions that are symmetric across responses. If we combine (\ref{ci}) or (\ref{cf}) only with transformations that are identical across indices, we generate only profiles of chronometric functions that are identical across indices. Different of these cases will be relevant in our applications in Section \ref{sec.app}.

\subsection{Main Result}

Given observed data $(p_j,F_j)_j$ and a representative profile $(c^*_j)_j$ of chronometric functions, we can derive the empirical distribution functions
\begin{equation}\label{defH}H_j(x) = \begin{cases} 1 - p^1_j F^1_j(c^*_j(x)) & \text{if } x>0, \\ p^0_j F^0_j(c^*_j(x)) & \text{if } x \leq 0,\end{cases}\end{equation}
for all $j \in J$.\footnote{See \citet[Online Appendix p.\ 5]{AFN18} for an analogous construction.} Each $H_j$ is a well-defined and continuous cdf. It would be the true cdf of the binary response model inducing $(p_j,F_j)$ if $c^*_j$ was the true chronometric function.

Of course, $c^*_j$ may well be different from the true chronometric function. However, the proof of our main result will show that whenever an admissible chronometric function $c_j$ can be written as $c_j = c_j^* \circ \psi_j$ for some $\psi_j$, the corresponding distribution $G_j$ can be written as $G_j = H_j \circ \psi_j$. Therefore, if $(H_j)_j$ satisfies a property that is invariant to $\Psi$, then any $(G_j)_j$ that is compatible with the data also satisfies that property, under the assumption that $((c^*_j)_j,\Psi)$ generates $\mathscr{C}^*$. This gives rise to the following result. 

\begin{theorem}\label{t1} Suppose $\mathscr{C}^*$ is generated by $((c^*_j)_j,\Psi)$. If $(H_j)_j$ defined in (\ref{defH}) satisfies a property $\mathbf{P}$ that is invariant to transformations $\Psi$, then $\mathbf{P}$ is detected.
\end{theorem}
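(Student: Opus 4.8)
The plan is to show that every distribution profile $(G_j)_j$ compatible with the data coincides with $(H_j \circ \psi_j)_j$ for some $(\psi_j)_j \in \Psi$, and then invoke invariance. First I would take an arbitrary pair $((G_j)_j,(c_j)_j) \in \mathscr{G} \times \mathscr{C}^*$ that induces the observed data $(p_j,F_j)_j$ in the sense of (\ref{gen0}) and (\ref{gen1}). Since $\mathscr{C}^*$ is generated by $((c^*_j)_j,\Psi)$, Definition \ref{defgen} furnishes a profile $(\psi_j)_j \in \Psi$ with $c_j = c_j^* \circ \psi_j$ for every $j$. Because $\psi_j$ is a strictly increasing bijection and must yield a well-defined chronometric function, it satisfies $\psi_j(0)=0$ and maps $\mathbb{R}_-$ onto $\mathbb{R}_-$ and $\mathbb{R}_+$ onto $\mathbb{R}_+$.

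The core computation is to verify $G_j = H_j \circ \psi_j$ pointwise, which I would do separately on the two branches defining $H_j$ in (\ref{defH}). Fix $j$ and consider $x \leq 0$. Using $c_j = c_j^* \circ \psi_j$ and the fact that $\psi_j$ maps $\mathbb{R}_-$ to $\mathbb{R}_-$, the restriction satisfies $c_j^0 = (c_j^*)^0 \circ \psi_j$ on $\mathbb{R}_-$, so the inverses compose in the reverse order: $(c_j^0)^{-1} = \psi_j^{-1} \circ ((c_j^*)^0)^{-1}$ on $(\underline{t},\overline{t}]$. Now evaluate the data equation (\ref{gen0}) at the argument $t = c_j^*(\psi_j(x)) = c_j(x)$ for a point $x \leq 0$. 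The left-hand side is $p^0_j F^0_j(c_j^*(\psi_j(x)))$, which by definition (\ref{defH}) equals $H_j(\psi_j(x))$. The right-hand side is $G_j((c_j^0)^{-1}(c_j(x))) = G_j(x)$, since $(c_j^0)^{-1}(c_j(x)) = x$ for $x \in \mathbb{R}_-$. This yields $G_j(x) = H_j(\psi_j(x))$ on $\mathbb{R}_-$. The branch $x>0$ is entirely analogous, using (\ref{gen1}), the relation $(c_j^1)^{-1} = \psi_j^{-1} \circ ((c_j^*)^1)^{-1}$, and the top branch of (\ref{defH}); here the two occurrences of $1-(\cdot)$ cancel so one again obtains $G_j(x)=H_j(\psi_j(x))$.

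Having established $(G_j)_j = (H_j \circ \psi_j)_j$ with $(\psi_j)_j \in \Psi$, the conclusion is immediate: by hypothesis $(H_j)_j$ satisfies $\mathbf{P}$, and since $\mathbf{P}$ is invariant to transformations $\Psi$ (Definition \ref{definvar}), the composed profile $(H_j \circ \psi_j)_j$ also satisfies $\mathbf{P}$. Hence the arbitrary compatible $(G_j)_j$ satisfies $\mathbf{P}$, which is exactly the definition of detection (Definition \ref{defdet}).

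The step I expect to be the main obstacle is the careful bookkeeping around the boundary value $t=\underline{t}$ and the extended inverses. The identity $(c_j^0)^{-1}(c_j(x))=x$ is transparent for $x$ with $c_j(x)>\underline{t}$, but I need to confirm it still delivers the right value of $G_j$ when $c_j(x)=\underline{t}$, using the extension conventions $(c^0)^{-1}(\underline{t})$ and $G(-\infty)=0$, $G(+\infty)=1$ introduced before (\ref{gen0}) and (\ref{gen1}). Since $G_j$ is continuous and the chronometric functions are continuous and strictly monotone away from $\underline{t}$, the equality $G_j = H_j \circ \psi_j$ extends to these boundary points by continuity, but I would check explicitly that both the asymptotic case ($c_j$ never reaching $\underline{t}$) and the finite case ($c_j$ attaining $\underline{t}$) are handled consistently so that no mass is misassigned at $\pm\infty$.
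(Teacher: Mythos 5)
Your argument is the paper's argument: take any $((G_j)_j,(c_j)_j)$ inducing the data, use generation to write $c_j=c^*_j\circ\psi_j$, invert to get $(c^i_j)^{-1}=\psi_j^{-1}\circ(c^{*i}_j)^{-1}$, conclude $G_j=H_j\circ\psi_j$ (the paper writes the equivalent identity $H_j=G_j\circ\psi_j^{-1}$), and finish by invariance and Definition \ref{defdet}. On the region where $c_j(x)>\underline{t}$ your pointwise computation is correct and matches the paper's.

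The one step that would fail as written is the boundary case you flagged yourself. When $c_j$ attains $\underline{t}$ at a finite $\bar{x}_j=(c^0_j)^{-1}(\underline{t})$ and $x<\bar{x}_j$, the identity $(c^0_j)^{-1}(c_j(x))=x$ is simply false: the left-hand side equals $\bar{x}_j$, not $x$, so evaluating (\ref{gen0}) at $t=c_j(x)=\underline{t}$ yields $G_j(\bar{x}_j)$ rather than $G_j(x)$. Your proposed repair, ``extends by continuity,'' does not close this gap: continuity delivers the equality at the single point $\bar{x}_j$ (as a right limit), but two continuous functions that agree on $(\bar{x}_j,0]$ need not agree on $(-\infty,\bar{x}_j)$. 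What actually closes it---and this is the step the paper's proof spells out---is the standing assumption on the data that $F^0_j(\underline{t})=0$. Since $(G_j,c_j)$ induces the data, $G_j(\bar{x}_j)=p^0_jF^0_j(\underline{t})=0$, hence $G_j\equiv 0$ on $(-\infty,\bar{x}_j]$ by monotonicity; on the other side, for $x\le\bar{x}_j$ we have $\psi_j(x)\le\bar{x}^*_j=(c^{*0}_j)^{-1}(\underline{t})$, so $H_j(\psi_j(x))=p^0_jF^0_j(c^{*0}_j(\psi_j(x)))=p^0_jF^0_j(\underline{t})=0$ as well. Both sides therefore vanish identically on that ray, and symmetrically both equal $1$ for $x\ge(c^1_j)^{-1}(\underline{t})$ using $F^1_j(\underline{t})=0$. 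Substituting this argument for the continuity claim makes your proof complete, and it then coincides with the paper's proof.
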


\begin{proof} Let $((G_j)_j,(c_j)_j) \in \mathscr{G} \times \mathscr{C}^*$ be any model that induces $(p_j,F_j)_j$. Using the definition of $(H_j)_j$ and the conditions (\ref{gen0}) and (\ref{gen1}) for inducing the data, we obtain that
\[H_j(x) = \begin{cases} G_j((c^1_j)^{-1}(c^{*1}_j(x))) & \text{if } x>0, \\ G_j((c^0_j)^{-1}(c^{*0}_j(x))) & \text{if } x \leq 0,\end{cases}\]
for each $j \in J$. Since $((c^*_j)_j,\Psi)$ generates $\mathscr{C}^*$, there exists $(\psi_j)_j \in \Psi$ such that $c_j = c^*_j \circ \psi_j$, for each $j \in J$. We therefore have $c^0_j(x)=c^{*0}_j(\psi_j(x))$ for any $x \leq 0$ and $c^1_j(x)=c^{*1}_j(\psi_j(x))$ for any $x > 0$. It follows that $(c^0_j)^{-1}(t)=\psi_j^{-1}((c^{*0}_j)^{-1}(t))$ and $(c^1_j)^{-1}(t)=\psi_j^{-1}((c^{*1}_j)^{-1}(t))$ for all $t \in [\underline{t},\overline{t}]$. It then follows that
\[H_j(x) = \begin{cases} G_j(\psi^{-1}_j(x)) & \text{if } x>0, \\ G_j(\psi^{-1}_j(x) ) & \text{if }  x \leq 0.\end{cases}\]
Note that this is also true when $\bar{x}^*_j:=(c^{*0}_j)^{-1}(\underline{t})$ is finite and we consider any $x \leq \bar{x}^*_j$. In that case, $\bar{x}_j:=(c^{0}_j)^{-1}(\underline{t})=\psi^{-1}_j(\bar{x}^*_j)$ is also finite, and we have $G_j(x)=0$ for all $x \leq \bar{x}_j$ because $(G_j,c_j)$ induces data with $F^0_j(\underline{t})=0$. For all $x \leq \bar{x}^{*}_j$, we therefore have $G_j((c^0_j)^{-1}(c^{*0}_j(x)))=G_j(\bar{x}_j)=0=G_j(\psi^{-1}_j(\bar{x}^*_j))=G_j(\psi^{-1}_j(x))$. The analogous argument applies to $x \geq (c^{*1}_j)^{-1}(\underline{t})$. To summarize, we have $H_j = G_j \circ \psi^{-1}_j$, for each $j \in J$. If $(H_j)_j$ satisfies a property $\mathbf{P}$ that is invariant to transformations $\Psi$, then $(H_j \circ \psi_j)_j = (G_j \circ \psi^{-1}_j \circ \psi_j)_j = (G_j)_j$ also satisfies $\mathbf{P}$, hence $\mathbf{P}$ is detected.
\end{proof}

Theorem \ref{t1} is easy to apply in a broad range of applications. If we are interested in some distributional property, we only need to check specific empirical distributions $(H_j)_j$ that are based on a representative profile $(c_j^*)_j$ of chronometric functions. These distributions can be constructed from the observed data. If they satisfy a property that is invariant to transformations $\Psi$, then this property is detected for the entire class of chronometric functions that $((c^*_j)_j,\Psi)$ generates. As we will show below, the resulting detection conditions can often be expressed directly and transparently in terms of the observable data, without recourse to specific functions $(c^*_j)_j$.

The theorem can also be used to detect whether a property is violated. Denote by $\mathbf{\neg  P}$ the property that property $\mathbf{P}$ is not true. The theorem immediately implies that if the empirical distributions $(H_j)_j$ satisfy $\mathbf{\neg  P}$ and this property is invariant to transformations $\Psi$, then $\mathbf{\neg  P}$ is detected. In words, the violation of a property is detected when $(H_j)_j$ violates the property and the violation is invariant to transformations $\Psi$.\footnote{Note that $\mathbf{P}$ and $\mathbf{\neg P}$ can be invariant to the same or to different sets of transformations. We also remark that detecting $\mathbf{\neg P}$ is stronger than not detecting $\mathbf{P}$. A property is not detected if \textit{at least one} model that is compatible with the data violates the property, while detecting $\mathbf{\neg P}$ requires that \textit{all} models that are compatible with the data violate the property \cite[see the discussion in][p.\ 3303]{liunetzer23}.}

\subsection{Illustration}\label{subsectillu}

For illustration, we first show how existing results in the literature can be reproduced as immediate corollaries of Theorem \ref{t1}.

\cite{AFN18} study random utility models where $G$ describes the distribution of $\tilde{x}=\left[u(1)  + \tilde{\epsilon}(1) \right] - \left[u(0) + \tilde{\epsilon}(0)\right]$. When the errors $\tilde{\epsilon}(\cdot)$ have mean zero, the mean of $\tilde{x}$ equals $u(1)-u(0)$, and deducing the agent's ordinal preference can be rephrased as detecting the sign of the mean.\footnote{Some distributions do not have a mean. This can be dealt with either by using the procedure described in Subsection \ref{subsec.dconstr} to restrict the set of distributions to those which have a mean, or by refining the desired property, for example to ``the mean exists and is positive.'' Analogous arguments apply whenever a property is not well-defined for all possible distributions.} Consider then the asymmetry property (\ref{TWTasym}) discussed earlier, which implies that the mean is positive. It is invariant to transformations that are symmetric around zero, so detecting it requires knowledge of the chronometric function up to symmetric transformations. The most natural assumption guaranteeing this is that the chronometric function itself is symmetric, i.e., the chronometric effect is the same for the two options. \cite{AFN18} made that assumption throughout. Then, irrespective of which representative symmetric $c^*$ we use, the distribution $H$ defined in (\ref{defH}) exhibits the desired asymmetry if and only if
\begin{equation}\label{twtcond}p^0F^0(t) \leq p^1F^1(t) \text{ for all } t\in[\underline{t},\overline{t}].\end{equation}
This inequality is the same as in Theorem $1$ of \citet{AFN18}.
These authors discuss in detail that observing choice frequencies $p^0 \leq p^1$ is not sufficient for a preference $u(0) \leq u(1)$ to be revealed without additional assumptions on error distributions like in conventional logit or probit models, posing a severe identification problem for random utility models. However, if the inequality holds for all response times, as stated in (\ref{twtcond}), then a preference is robustly revealed without distributional assumptions.\footnote{\cite{AFN18} report that slightly more than $60\%$ of stochastic choices in the data of \cite{clithero18predict} robustly reveal a preference. Several other authors have by now used the same condition to answer open questions in different fields. \cite{AFG23} show that a sizable fraction of choices that violate stochastic transitivity in experiments reveal non-transitive preferences and thus cannot be explained by transitive preferences together with noise. \cite{AFFG24} show that common ratio and common consequence effects reflect preferences rather than noise, and \cite{castillo24} confirm that choice reversals in line with salience theory reflect true preference reversals.}

\cite{liunetzer23} study ordered response models to learn from survey data which of two groups $j=A,B$ is happier. A condition for ranking two groups irrespective of the cardinal scale on which happiness is measured is first-order stochastic dominance (FOSD), as defined earlier in (\ref{HTfosd}). FOSD is invariant to pairs of identical transformations, so detecting it requires knowledge of the chronometric functions up to identical transformations. The most natural assumption guaranteeing this is that the chronometric functions are identical, i.e., the chronometric effect is the same for the two groups. \cite{liunetzer23} made that assumption throughout. Then, irrespective of which representative $(c^*_A,c_B^*)$ with $c^*_A=c_B^*$ we use, the distributions $H_A$ and $H_B$ defined in (\ref{defH}) exhibit FOSD if and only if
\begin{align}\label{htcond}
p_A^0 F_A^0(t) - p_B^0 F_B^0(t) \leq 0 \leq p_A^1 F_A^1(t) - p_B^1 F_B^1(t) \text{ for all } t\in[\underline{t},\overline{t}].
\end{align}
These inequalities coincide with $(i)$ and $(ii)$ of Proposition $2$ in \citet{liunetzer23}. While conventional models often just assume FOSD, these authors argue that response time data can be used to test that assumption.\footnote{\citet{liunetzer23} test these inequalities in survey data and show that the null hypothesis of FOSD often cannot be rejected, in particular in cases where a probit model yields significant parameter estimates, indicating that the results of conventional models often seem to be qualitatively robust. \cite{castillo24} have subsequently used the condition to confirm the robustness of their test of salience theory, and \cite{cheng2025bullet} have used it for evaluating the causal impact of bullet-chatting on online viewer behavior.}

\subsection{Extensions}

\subsubsection{Multiple Representative Chronometric Functions}\label{subsec.mutl}

Our first extension covers the case where $\mathscr{C}^*$ is not generated from one representative profile of chronometric functions $(c_j^*)_j$ but from multiple profiles $(c_{j}^{k*})_j$ where $k \in K$. We say that $((c_{j}^{k*})_j^k,\Psi)$ generates $\mathscr{C}^*$ if for each $(c_j)_j \in \mathscr{C}^*$ there exist $(\psi_j)_j \in \Psi$ and $k \in K$ such that $(c_j)_j = (c_{j}^{k*} \circ \psi_j)_j$. Instead of a single empirical profile $(H_j)_j$, we obtain one profile $(H_{j}^k)_{j}$ for each $k \in K$, defined as in (\ref{defH}) using the respective functions $(c_{j}^{k*})_j$. Theorem \ref{t1} then becomes that property $\mathbf{P}$ is detected if $(H_{j}^k)_j$ satisfies $\mathbf{P}$ for each $k \in K$, under the otherwise identical assumptions. In words, we simply need to repeat the previous procedure for each $k \in K$ separately, and we achieve detection if we achieve it for each $k \in K$.

Among other applications, this allows us to capture additional uncertainty about the chronometric function in a convenient way. Consider an analyst who cannot rule out that the chronometric effect is asymmetric between the two options.\footnote{More generally, the comparison of different options can be affected by factors on top of latent intensity \cite[see e.g.][]{goncalves24}, which could give rise to asymmetric or heterogeneous chronometric functions.} That analyst could work with a representative function $c^*$ that satisfies $c^{*0}(-x)=m(c^{*1}(x))$ for all $x \in \mathbb{R}_+$ and some adjustment function $m: [\underline{t},\overline{t}] \rightarrow [\underline{t},\overline{t}]$. The previous condition for detecting a positive mean, for example, then becomes
\begin{equation}\label{modtwtmean}p^0F^0(m(t)) \leq p^1F^1(t) \text{ for all } t\in[\underline{t},\overline{t}].\end{equation} 
The inequality is relaxed when $i=0$ is known to be chosen faster than $i=1$ for any given latent intensity ($m(t) \leq t$) and tightened in the opposite case ($m(t) \geq t$). When the analyst expects asymmetry but without knowing details, the inequality can be checked for different adjustment functions $m$ that reflect various degrees and directions of asymmetry. For a robust detection of a positive mean, all modified versions (\ref{modtwtmean}) then have to hold simultaneously, ultimately requiring that the gap between LHS and RHS must be sufficiently large. This generalizes the result of \cite{AFN18} in a natural way. 

\subsubsection{Distributional Assumptions}\label{subsec.dconstr}

Our second extension covers the case where the set of admissible distributions is restricted to some $\mathscr{G}^* \subseteq \mathscr{G}$. Any such restriction reflects prior knowledge of the analyst about properties of the distributions \citep[e.g.,][]{Manski88}. For example, in an application where each $G_j$ describes the willingness to pay of consumers minus the corresponding product price, the different distributions in the profile $(G_j)_j$ should all be horizontal shifts (by the difference in product prices) of one identical distribution. Some analysts may also be willing to impose the assumption of symmetry around the mean on all distributions in $\mathscr{G}^*$, like in conventional logit or probit models. 

Additional knowledge about the distributions can  make detection easier. A question that arises when working with restrictions on both $\mathscr{G}^*$ and $\mathscr{C}^*$ is whether the given data are \textit{rationalizable}, i.e., whether there exist distributions and chronometric functions in the restricted sets that induce the data.\footnote{\cite{echeniquesaito17} study rationalizability for the case of deterministic responses and response times. \cite{AFN18} discuss the difficulty of the problem when responses and response times are stochastic. For a single binary choice problem, they show that all stochastic choice functions with response times are rationalizable when the distributions are unrestricted. For multiple binary choice problems, they discuss necessary conditions for rationalizability as well as the possibility of rationalization by generalized random utility models.} In Appendix \ref{App.Rat} we study this question and provide necessary and sufficient conditions for rationalizability.

\subsubsection{Heterogeneity and Noise}\label{subsec.stoch}

Before applying our method to data, it is natural to ask about individual heterogeneity in response speed and about noise or measurement error  \citep[see, e.g.,][]{johnson2004}. \cite{AFN18} and \cite{liunetzer23} show that several of their results have appropriate extensions that facilitate empirical testing. We follow the same approach and model the response time of an individual with realized latent value $x$ by $t = c_j(x) \cdot \eta \cdot \epsilon$, where $c_j(x)$ is the chronometric function, $\eta > 0$ captures the individual's general response speed, and $\epsilon > 0$ is a noise term comprising uncontrolled factors and measurement error. We assume that the terms $(x, \eta)$ are drawn from a joint distribution, possibly $j$-specific, with marginal cdf $G_j(x)$ whose properties we are interested in. Moreover, we assume that the analyst has access to a baseline response, such as a demographic survey question, where the response time is given by $t_b = \phi \cdot \eta$ with intensity parameter $\phi>0$. The terms $(\epsilon, \phi)$ are also drawn from a joint distribution, possibly $j$-specific, but independent of the other variables.

The baseline response allows us to normalize response times in the decision problem of interest \citep[see][]{fazio90}. Working with $\hat{t} = t/t_b = c_j(x) \cdot \epsilon/\phi$ takes care of the individual heterogeneity embodied in $\eta$. However, the analyst who uses such normalized data $(p_j,\hat{F}_j)_j$ to derive candidate distributions $(\hat{H}_j)_j$ according to (\ref{defH}) is still misspecified, as she ignores the variation embodied in $\epsilon$ and $\phi$.

In Appendix \ref{App.Stoch}, we derive conditions under which the distributions $(\hat{H}_j)_j$ still inherit a property $\mathbf{P}$ from $(G_j)_j$ and can therefore be used for testing the null hypothesis that $\mathbf{P}$ is satisfied by the true distributions. In addition to invariance properties similar to our main analysis, the key requirement is invariance to multiplicative convolutions of the form
\[\hat{G}_j(x)=\int G_j(x z) \, d \mu_j(z),\]
where $\mu_j$ is the distribution of the noise $z=\epsilon/\phi$. For single distributions, this is typically true for properties that reduce to ``pointwise arguments'' in the integral. For profiles of distributions, it may additionally require that the distributions $\mu_j$ are identical across indices $j \in J$. We will discuss this again in the context of hypothesis testing in Section \ref{sec.app}.\footnote{There are also properties for which these convolutions are not innocuous. It is known, for instance, that single-crossing of cdfs does not aggregate in general \citep{QuahStrulovici2012}. When attempting to test such a condition in noisy data, one may have to look for restricted classes of latent distributions and noise terms under which testing remains possible.}

\section{Applications}\label{sec.app}

In this section, we study three applications and, in each of them, show how our theoretical result can be used to detect or reject economically meaningful distributional properties.

\subsection{Identifying an Optimal Nudge}\label{sec.app.frame}

Choices are often affected by the framing of a decision problem. Examples include the role of defaults for 401(k) savings \citep{madrian01}, the behavioral effect of phrasing the consequences of a disease as either gains or losses \citep{Tversky81}, and the relevance of the order of presentation for food choices \citep{bucher16}. The literature on nudging has promoted the idea of exploiting such framing effects to influence choices in a non-coercive manner \citep{thalersun03,thalersun08}. An important question arising in that case is how to identify the agents' true preferences and the optimal nudge \citep{benkert18}. Given that observed choices are necessarily distorted by framing, how can it be justified that a frame like a savings default helps agents make decisions that are more in line with their true preferences?

Response time data may help to answer that question. To see how, consider the following simple model of choice with frames. The true and non-distorted distribution of the latent variable in a heterogeneous population of agents is given by $G$. This variable could be a preference parameter measuring patience, a political or moral attitude, or the willingness to pay for a product. The decision problem is necessarily framed in a way that creates a bias in favor of one of the two responses. Choices under frame $f_+$ are generated by a distribution $G_+$ that is biased in favor of $i=1$, and choices under frame $f_-$ are generated by a distribution $G_-$ that is biased in favor of $i=0$. We impose only one assumption on the relation between $G$ and its biased versions, namely that there exists $z \geq 0$ with 
$G_-(-z) = G(0) = G_+(+z).$ This condition holds, for instance, when framing induces a symmetric distortion of the true indifference point without altering its ordinal position in the distribution. In that case, an agent who is truly indifferent between the two options perceives them as having a utility difference of $-z$ or $+z$, respectively. The value of $z$ is not assumed to be known. Beyond the above condition, the shapes of $G_-$ and $G_+$ can be distorted arbitrarily from $G$. With only this assumption, it is impossible to learn the non-distorted fraction $p^0 = G(0)$ of agents who truly prefer $i=0$, when observing only the distorted fractions $p_-^0 = G_-(0)$ and $p_+^0 = G_+(0)$. Knowing $p^0$, however, is crucial in a range of applications. For example, we need to know $p^0$ to identify which of the two frames induces the optimal choice for the majority of agents. 

For the pair $(G_-,G_+)$ of distributions, the property that we want to detect is hence the value $p^0$ such that $p^0 = G_-(-z) = G_+(+z)$ for some $z \geq 0$. Note that this value $p^0$ is unique for a given pair of distributions even though $z$ may not be. However, different pairs of distributions that are compatible with the same data could exhibit different values of $p^0$. The exact value of $p^0$ is invariant to all transformations $(\psi_-,\psi_+)$  of $(G_-,G_+)$ satisfying $-\psi_{-}(-x)=\psi_{+}(x)$ for all $x\geq 0$. That is, the restrictions of $\psi_-$ to $\mathbb{R}_{-}$ and of $\psi_+$ to $\mathbb{R}_{+}$ coincide after a change of sign. A natural assumption guaranteeing that the chronometric functions are known up to these transformations is that the chronometric effect when choosing the option favored by the frame is the same in both frames.\footnote{\label{fn:default}This allows the frame to have a direct impact on response times, for example when choosing the default requires fewer actions and hence less time, as long as this does not depend on which option is the default.} Given arbitrary representative chronometric functions with that property, we can then construct distributions $(H_-,H_+)$ by (\ref{defH}) and Theorem \ref{t1} implies that we detect the value $p^0_{RT}$ for which
\begin{equation}\label{prtdetect}p^0_{RT} = p_-^0 F_-^0(t^*) = 1 - p_+^1 F_+^1(t^*)\end{equation}
holds for some $t^* \in [\underline{t},\overline{t}]$. Intuitively, a truly indifferent agent is equally fast choosing $i=0$ under frame $f_-$ as choosing $i=1$ under frame $f_+$. The mass of agents who truly prefer $i=0$ therefore equals both the mass of agents who under frame $f_-$ choose $i=0$ earlier than the indifferent agent and the mass of agents who under frame $f_+$ do not choose $i=1$ earlier than the indifferent agent. Condition (\ref{prtdetect}) determines the response time of the truly indifferent agent by equating these two masses. The term $p_-^0 F_-^0(t)$ is a continuous and weakly increasing function that starts at $0$ for $t = \underline{t}$ and ends at $p_-^0$ for $t=\overline{t}$. The term $1-p_+^1 F_+^1(t)$ is a continuous and weakly decreasing function which starts at $1$ and ends at $1-p_+^1=p_+^0 \leq p_-^0$. Hence, while the value $t^*$ that solves (\ref{prtdetect}) is not necessarily unique, the detected value $p^0_{RT}$ is unique and satisfies $p_+^0 \leq p^0_{RT} \leq p_-^0$. Response times allow us to determine where exactly the true value falls in between the two distorted values $p_+^0$ and $p_-^0$.

The literature has proposed alternative ways to determine $p^0$. For example, it is common in experimental economics and survey design to randomize features that may affect responses but are deemed irrelevant. In our context, this means that we obtain
\begin{equation}\label{pavdetect}p^0_{AV} = \frac{p_-^0 + p_+^0}{2}.\end{equation}
A more sophisticated method was proposed by \cite{goldinreck19,goldinreck20}. They assume that some agents are consistent and always choose in line with their true preferences, while other agents are inconsistent and always follow the frame. Choices against a frame then identify all consistent agents with the respective opposing preference, so that
\begin{equation}\label{pgrdetect}p^0_{GR} = \frac{p_+^0}{p_+^0 + p_-^1}\end{equation}
identifies the share among the consistent agents who truly prefer $i=0$. Under the additional assumption that consistent and inconsistent agents have the same distribution of preferences, (\ref{pgrdetect}) also identifies the value $p^0$ in the entire population.\footnote{\cite{goldinreck19,goldinreck20} suggest the assumption of identical preferences conditional on observable covariates. Since we do not use covariates to improve the other methods, we do not use them here either.}

We now proceed to compare the three predictions (\ref{prtdetect})-(\ref{pgrdetect}) empirically. We use the data of \citet{serragarcia23} on the impact of defaults on the willingness to vaccinate in the context of the COVID-19 pandemic. In the first wave of their experiment, \citet{serragarcia23} recruited about $600$ subjects on Prolific and elicited their intention to get vaccinated. Subjects were randomly assigned to one of three treatments: opt-out, opt-in, and active-choice. The opt-out treatment described a situation where an appointment for vaccination already exists, and the ``Leave as is and receive the vaccine'' answer option was preset as the default on the computer screen. Analogously, the opt-in treatment described a situation with no appointment as the preset default. The active-choice treatment described the problem neutrally without any preset default. Response times were recorded.\footnote{The data for all applications in Section \ref{sec.app} were collected using Qualtrics. In Subsections \ref{subsec.app.happy} and \ref{sec.predicting}, we follow \cite{liunetzer23} and define response time by the time of the last click before submission of the answer. We have to deviate from this rule here in Subsection \ref{sec.app.frame}, because the existence of a preset default answer made it possible to submit that answer without any earlier clicks, so that the time of the last click is often missing. We therefore use the time of submission of the answer here. Recall that it is not a problem for our approach if choosing the default is systematically faster because it requires fewer clicks than choosing against the default, as discussed in footnote \ref{fn:default}.} While a neutral framing is often difficult or impossible to achieve in real-world applications, the existence of the neutral active-choice treatment in the experiment allows us to evaluate the predictions of all three methods against a ground truth.

The predictions are, despite being in principle designed for a within-subjects application, valid for the experimental between-subjects design under the assumption that random allocation of subjects to treatments generates samples representative of the population. For our response time-based prediction, we normalize response times by subtracting in logs a baseline response time, to account for individual heterogeneity (see Section \ref{subsec.stoch}). We use as a baseline the total duration it took subjects to complete the study net of the response time in the question about vaccination.

In Figure \ref{fig:share_unvaccinated}, the blue bars show the shares of subjects who decide to remain unvaccinated under opt-in ($p_-^0=0.342$), opt-out ($p_+^0=0.303$), and active-choice ($p^0=0.335$). The orange bars are our predictions for the active-choice outcome based on response time ($p_{RT}^0=0.338$), simple averaging ($p_{AV}^0=0.323$), and Goldin-Reck ($p_{GR}^0=0.315$). The response time-based method clearly outperforms both alternatives and gets very close to the actual choice share in the active-choice treatment.

We can also ask which of the two defaults induces the optimal choice for a larger number of subjects. Among the subjects who respond to the frame, a fraction $(p^0 - p_+^0)/(p_-^0 - p_+^0) = 0.821$ truly prefer to remain unvaccinated, and hence opt-in is optimal for a larger number of responsive subjects.\footnote{That statement about the optimal frame is subject to various qualifications. For example, it ignores possible welfare effects of the default on subjects who do not respond to changes in the default. It also ignores other welfare effects of vaccinations due to externalities etc.} If we replace the true value $p^0$ in this expression by the estimated value $p_{RT}^0$, then we arrive at the same conclusion and thus correctly identify the optimal nudge. By contrast, using $p_{AV}^0$ by definition always implies that both frames are equally good, while using $p_{GR}^0$ here suggests opt-out as optimal.

\begin{figure}[t]
  \centering
  \includegraphics[scale=0.5]{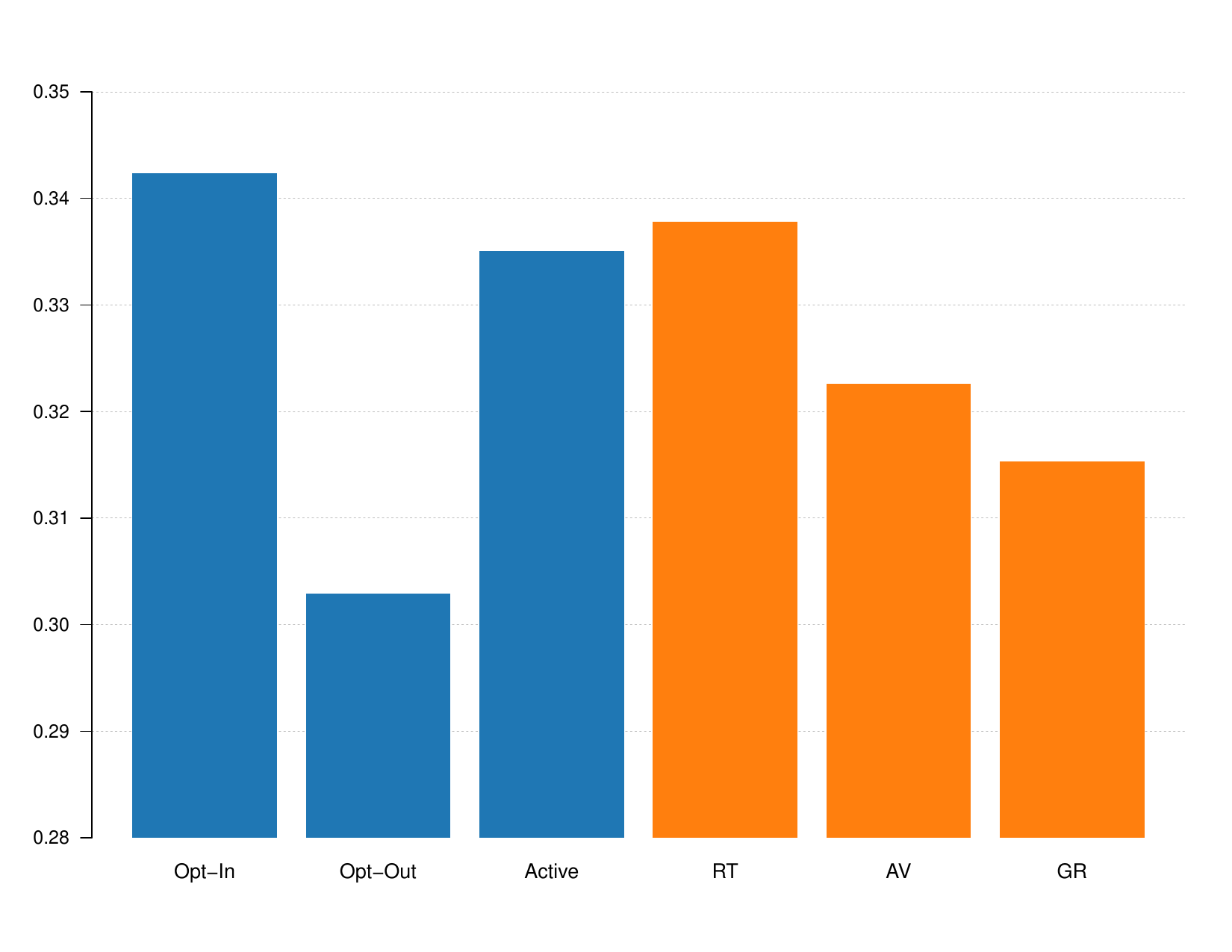}
  \vspace{-2mm}
  \caption{Shares unvaccinated}\label{fig:share_unvaccinated}
  \vspace{3mm}
\begin{minipage}{\textwidth}
\small\textit{Notes:} The blue bars depict the shares of subjects who decide to remain unvaccinated under the treatments opt-in, opt-out, and active-choice in the experiment of \citet{serragarcia23}. The orange bars depict the predictions of the share of unvaccinated subjects under active-choice for the three methods based on response times (RT), simple averaging (AV), and Goldin-Reck (GR).
\end{minipage}
\end{figure}

To get an impression of how variable the three predictions are, we sample with replacement from the opt-in and opt-out treatment groups to generate synthetic treatment groups of the same size as the original data. We do this 5'000 times and each time compute the predictions and their difference to the active-choice benchmark, which we keep fixed.\footnote{If the experiment were a within-subjects design, it would be appropriate to sample subjects and then use their choices across all three treatments in the synthetic samples. This is not feasible with the current between-subjects design, and hence we keep the active-choice benchmark as the fixed ground truth. When sampling from the original data, it can happen that the fraction of unvaccinated subjects becomes smaller under opt-in than under opt-out, effectively corresponding to a directional reversal of the standard default effect. We let the data speak in those cases and treat opt-in as the frame favoring vaccination.} The response time-based prediction remains the most precise estimate in 55.1\% of the samples, followed by averaging in 22.8\% and Goldin-Reck in 22.1\% of all samples. The optimal nudge is correctly identified by the response time-based method in 68.7\% of all samples, followed by averaging in 50.0\% (by definition) and Goldin-Reck in 28.2\% of all samples.

Overall, this application shows that response times can recover meaningful preference information even when choices are distorted by a frame. Our method yields a reliable estimate of the non-distorted choice share and identifies which frame best aligns with underlying preferences. The empirical comparison highlights that this approach can outperform existing alternatives, both in accuracy and in its ability to guide the choice of an optimal nudge.

\subsection{Testing Decreasing Marginal Happiness}\label{subsec.app.happy}

\citet{Easterlin2005} postulates that ``[f]ew generalizations in the social sciences enjoy such wide-ranging support as that of diminishing marginal utility of income'', where he interprets utility explicitly as ``subjective well-being'' (p.\ 243). Yet the evidence is surprisingly fragile. Easterlin himself showed that the principle does not extend from cross-section to time series data. Later studies, including \citet{Oswald08} and \citet{KaiserOswald22}, emphasized that even cross-sectional patterns are inconclusive. The difficulty is that the observed relationship between income and reported happiness is the composition of two functions: the true mapping $h: W \rightarrow H$ from income to happiness and the reporting function $r: H \rightarrow R$ from happiness to survey response. Observing that $r(h(w))$ is concave does not imply that $h(w)$ is concave and therefore does not establish decreasing marginal happiness. The relationship may just as well be due to a concave reporting function $r(h)$. Other studies have estimated advanced ordered response models, but as \cite{BL19} have shown, their conclusions depend on the distributional assumptions of the model. \cite{KaiserOswald22} conclude that the problem is ``fundamental, little recognized, and so far unsolved'' (p.\ 3).

To formalize the income-happiness relation in our framework, let $(G_w)_{w}$ denote the family of happiness distributions for all possible income levels $w \in W \subseteq \mathbb{R}_+$ and define $\mu(w)= \int_{\mathbb{R}} x dG_w(x)$ as the average happiness of agents with income $w$. The question of interest is whether $\mu$ is a concave function of $w$. A natural starting point is to plot the average response in a happiness survey against income, as many studies have done. We repeat this exercise here, using MTurk survey data of \cite{liunetzer23} in which 3'743 respondents answered a binary question on whether they are ``rather happy'' or ``rather unhappy.'' Household income in the survey was reported in three broad bins: below \$40'000, between \$40'000 and \$69'999, and above \$70'000. To proceed, we need to associate each of these bins with a unique income value, denoted $w_L$, $w_M$ and $w_H$. As a first illustrative benchmark, we take the midpoint of each bin, assuming an upper bound of \$110'000 for the high-income bin. Figure \ref{fig:conc} then shows that the share of ``rather happy'' respondents rises with income but at a diminishing rate, producing a concave pattern. Like in \citet{Easterlin2005} and the literature cited therein, this is a bivariate relation without any additional controls, but the results are similar when conditioning on various socio-demographic variables.

\begin{figure}[t]
\centering
     \includegraphics[scale=0.5]{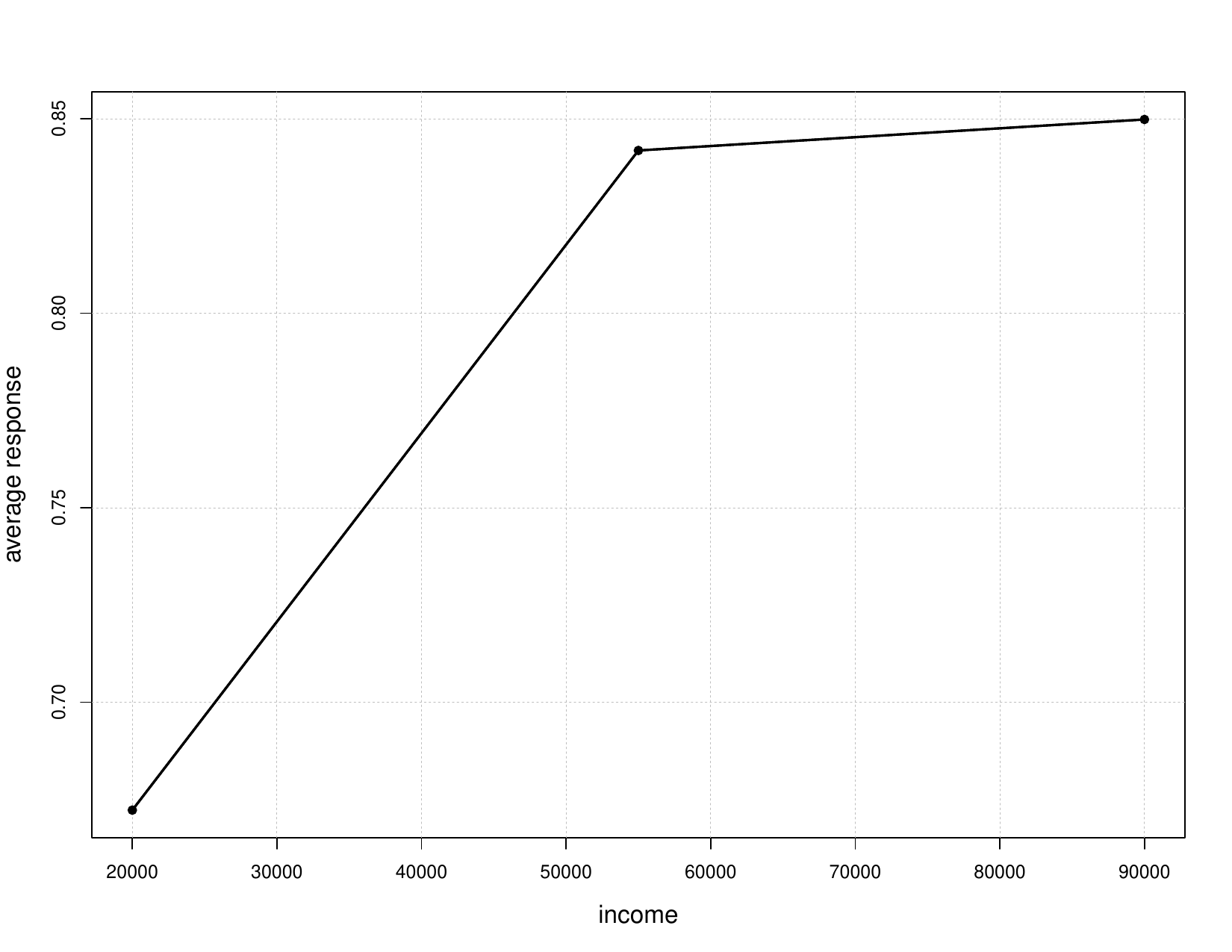}
       \caption{The relationship between income and reported happiness.}\label{fig:conc}
        \vspace{3mm}
   \begin{minipage}{\textwidth}
   \small\textit{Notes:} The curve corresponds to the case where the income for each bin is given by the midpoint of the bin, assuming an upper bound of \$110'000 for the high-income bin. Responses are coded as $0$ (``rather unhappy'') and $1$ (``rather happy''), so the average coincides with the share of respondents reporting ``rather happy.''
  \end{minipage}
\end{figure}

Our point is that concavity in Figure \ref{fig:conc} does not establish concavity of $\mu$, which here reduces to the single inequality 
\begin{align}\label{eq.midpoint_concavity}
\alpha \mu(w_L) + (1-\alpha) \mu(w_H) \leq \mu(w_M),
\end{align}
where $\alpha$ is such that $\alpha w_L + (1-\alpha) w_H = w_M$ (and which becomes $\alpha=0.5$ in our illustrative benchmark). Following the logic of \cite{BL19}, we can explain Figure \ref{fig:conc} with happiness distributions $(G_L,G_M,G_H)$ for which (\ref{eq.midpoint_concavity}) is violated, for example distributions that become more right-skewed as income grows and which therefore go along with high average happiness among the rich. It is impossible to reject or verify such distributional assumptions based on response data alone.

According to (\ref{eq.midpoint_concavity}), decreasing marginal happiness is equivalent to a ranking of the means of two distributions. The LHS of (\ref{eq.midpoint_concavity}) represents the mean happiness of a population that mixes low- and high-income individuals in proportions $\alpha$ and $1-\alpha$, while the RHS is the mean happiness of the middle-income group. We now develop a response time-based criterion that allows us to rank these means. 

The ranking of the means of two distributions $G_A$ and $G_B$ is invariant to positive affine transformations that are identical for the two groups. Detecting it requires knowledge of the chronometric functions up to common and positive linear transformations. Since this may be too demanding, we can work with a sufficient condition. Consider the relative asymmetry
\[G_A(x)+G_A(-x) \leq G_B(x)+G_B(-x) \; \text{for all} \; x \in \mathbb{R}_+,\]
which implies that the mean of $G_A$ is larger than the mean of $G_B$ (see Appendix \ref{App.Rank}). This condition is invariant to transformations that are identical across groups and symmetric around zero. Therefore, an assumption guaranteeing that it can be detected from response time data is that the chronometric effect is the same for the two options and the two groups, but otherwise unrestricted. Theorem \ref{t1} now implies that the condition is detected if it holds for functions $H_A$ and $H_B$ constructed based on (\ref{defH}). This can once more be expressed directly in terms of the observed data, and it follows that
\begin{align}\label{avcond}
p_A^0 F_A^0(t) - p_B^0 F_B^0(t) \leq p_A^1 F_A^1(t) - p_B^1 F_B^1(t) \; \text{for all} \; t \in [\underline{t},\overline{t}]
\end{align}
is sufficient for detecting that the mean of $G_A$ is larger than the mean of $G_B$.\footnote{Condition (\ref{avcond}) was first derived in our earlier working paper \citet{liunetzer20}. We referenced it in a footnote of \cite{liunetzer23} but have not published it otherwise.} Intuitively, the requirement that the latent distribution of group $A$ is relatively more asymmetric towards large values than the distribution of group $B$ translates into the empirical condition that response differences between group $A$ and group $B$ must be larger for $i=1$ than for $i=0$, even when considering only those responses that happened at any $t$ or earlier.

We use this condition to test the hypothesis that marginal happiness is decreasing in income, as formalized in (\ref{eq.midpoint_concavity}).\footnote{There are similar yet different approaches in the literature. Already in his 1996 opus magnum \textit{Infinite Jest}, novelist David Foster Wallace described a related idea. The novel centers around a lost movie that is so entertaining that people who have seen it are willing to sacrifice everything, including their toes, to see it again. A terrorist organization wants to use the movie as a weapon and contemplates how to test that the marginal entertainment of the movie is not decreasing. They propose checking that the time it takes for people to accept sacrificing another toe is not increasing after repeated views of the movie. We remark that this procedure does not cleanly distinguish between the marginal entertainment of the movie and the marginal cost of losing a toe. A non-fiction investigation of how response times vary with value levels in an individual decision-making task was done by \citet{shevlin22}. They show that response times tend to decrease in the value level, incompatible with decreasing marginal utility. \citet{liunetzer23} have shown that, using an ordered probit model, happiness increases significantly when moving from the low to the middle income category and that, using response times, the hypothesis of FOSD of the happiness distributions cannot be rejected. The effect of moving from middle to high income is insignificant and FOSD can be rejected. These results do not yet establish concavity of the income-happiness relation.} Under the assumption that all three income groups have the same chronometric function, which our criterion requires anyway, we mix the two extreme income groups with appropriate weights to obtain the cdf $G_P(x)= \alpha G_L(x) + (1-\alpha) G_H(x)$ of happiness in the pooled group. Condition (\ref{avcond}) then becomes the sufficient condition to detect decreasing marginal happiness
\begin{align}\label{null_avcond}
p_M^0F_M^0(t) - p_P^0F_P^0(t) \leq p_M^1F_M^1(t) - p_P^1F_P^1(t) \; \text{for all} \; t \in [\underline{t},\overline{t}],
\end{align}
where $p_P^i F_P^i(t)= \alpha p_L^i F_L^i(t) + (1-\alpha) p_H^i F_H^i(t)$. Intuitively, this inequality rules out examples like the increasingly right-skewed distributions discussed above, as these would generate relatively fast happy and slow unhappy responses in the pooled group.

\begin{figure}[t]
  \centering
  \includegraphics[scale=0.5]{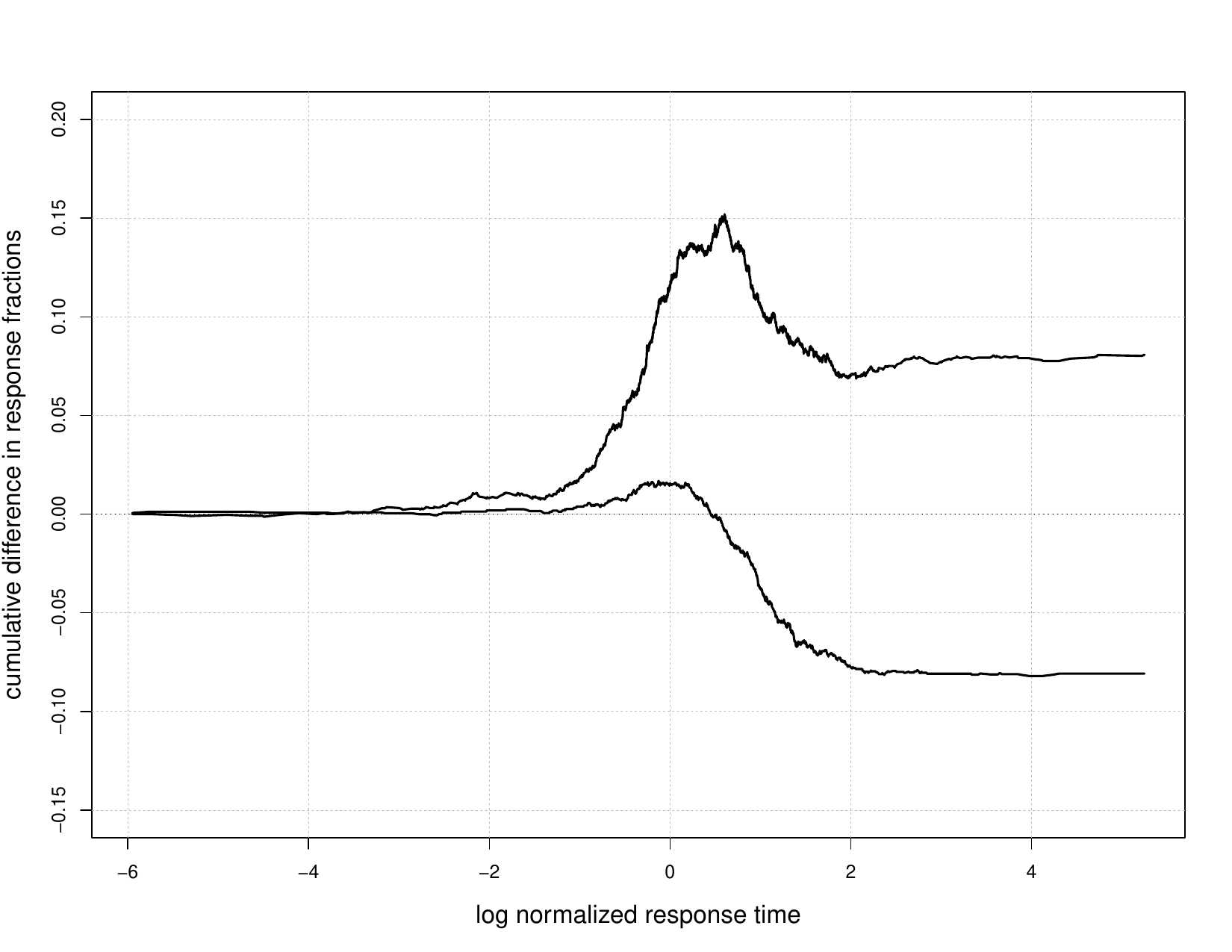}
  \caption{Empirical conditions for testing the income-happiness relation.}\label{fig:fosd}
  \vspace{3mm}
\begin{minipage}{\textwidth}
\small\textit{Notes:} The curve at the top represents the empirical function $p_M^1 F_M^1(t)- p_P^1 F_P^1(t)$, while the one at the bottom represents $p_M^0 F_M^0(t) - p_P^0 F_P^0(t)$, both depicted for the case $\alpha=0.5$.
\end{minipage}
\end{figure}

We follow \cite{liunetzer23} and normalize response times by subtracting in logs the response time to a baseline marital-status question. This normalization corroborates further the assumption of identical chronometric functions in the different income groups.\footnote{Our results in Subsection \ref{subsec.stoch} imply that all properties which we investigate here can be tested using the response time-based detection conditions: under the null hypothesis that the property holds, its detection condition must be satisfied in the normalized data even when there is noise or measurement error that is identically distributed across the income groups being compared.} Figure \ref{fig:fosd} plots the two sides of (\ref{null_avcond}) for the illustrative benchmark of $\alpha=0.5$. The inequality is not exactly satisfied in the data, because the LHS exceeds the RHS for some small $t$. To assess statistical significance, we employ a bootstrap-based method as in \cite{liunetzer23} that rests on a test by \citet{barrett2003consistent}. To ensure robustness, we also go beyond the benchmark $\alpha=0.5$ and conduct the test separately for each value $\alpha \in \{0.1, 0.2, \ldots, 0.9\}$, i.e., we vary the mixing weight used to construct the pooled group $P$.\footnote{Different choices of representative incomes for the survey bins imply different values of $\alpha$ through the identity $\alpha w_L + (1-\alpha) w_H = w_M$. For instance, changing the upper bound for the high-income group from \$110'000 to \$150'000 or \$200'000 results in $\alpha=0.611$ or $\alpha=0.696$, respectively. Alternatively, one could use more granular income data about MTurk subjects drawn from other studies to approximate the average income in each bin. Doing this with the data from \citet[Table 2]{moss2023ethical}, and using an upper bound of \$200'000 for the high-income bin, we obtain $\alpha=0.646$.} As Table \ref{tab:concavity_results} shows, the p-values for the null hypothesis that condition (\ref{null_avcond}) holds are consistently large. We clearly cannot reject our sufficient condition for decreasing marginal happiness.

\begin{table}[t]
    \centering
    \begin{tabular}{ccc}
    \hline
    $\alpha$ & \multicolumn{2}{c}{p-value} \\
     & (\ref{null_avcond}) & (\ref{null_htcond}) \\
    \hline
    0.1 & 0.962 & 0.195 \\
    0.2 & 0.961 & 0.205 \\
    0.3 & 0.958 & 0.251 \\
    0.4 & 0.956 & 0.298 \\
    0.5 & 0.955 & 0.351 \\
    0.6 & 0.949 & 0.405 \\
    0.7 & 0.941 & 0.449 \\
    0.8 & 0.941 & 0.517 \\
    0.9 & 0.934 & 0.566 \\
    \hline
    \end{tabular}
    \caption{Summary of tests for concavity.}
    \label{tab:concavity_results}
\end{table}

A stronger sufficient condition for the ranking of two means is first-order stochastic dominance, discussed in Section \ref{subsectillu}. We therefore also consider the property that $G_M$ first-order stochastically dominates $G_P$, which is a fortiori sufficient for decreasing marginal happiness, but whose detection does not require symmetry of the chronometric functions since no comparisons between positive and negative values of the latent variable are involved. The testing condition here becomes
\begin{align}\label{null_htcond}
p_M^0 F_M^0(t) - p_P^0 F_P^0(t) \leq 0 \leq p_M^1 F_M^1(t) - p_P^1 F_P^1(t) \; \text{for all} \; t \in [\underline{t},\overline{t}],
\end{align}
which in Figure \ref{fig:fosd} requires that the functions are separated by zero. While not exactly satisfied in the data, Table \ref{tab:concavity_results} shows that the p-values remain large.\footnote{The test implements the procedure based on \citet{barrett2003consistent} together with a joint hypothesis correction by \cite{romanowolf2016} to account for the fact that condition (\ref{null_htcond}) contains two inequalities. We refer the reader to \citet{liunetzer23} for details.} We clearly cannot reject even this stronger sufficient condition for decreasing marginal happiness.

Finally, we investigate the hypothesis that marginal happiness increases with income, which corresponds to reversing all inequalities studied so far. The reversed versions of \eqref{null_avcond} and \eqref{null_htcond} are rejected at all plausible levels of statistical significance (all $p=0.000$).

To summarize, our results are consistent with the idea that marginal happiness is decreasing in income in a cross-sectional data set. While our tests avoid several of the pitfalls noted earlier in this section, some limitations remain. Most notably,  we were only able to test the concavity of expected happiness for three income levels implied by the survey bins, rather than across the entire income distribution. Additionally, the income-happiness relation is bivariate without controlling for potential confounding factors. Future research that addresses these limitations holds great promise.

\subsection{Predicting Treatment Heterogeneity}\label{sec.predicting}

Policymakers and experimenters frequently need to decide which populations to target with a treatment. For instance, a policymaker may want to make costly employment training available only to the most responsive subgroup of the unemployed. An online platform may want to increase prices for those consumers who are least likely to reduce demand. A tax authority following the \cite{ramsey27} rule wants to impose larger taxes on markets with smaller elasticities. An experimenter may want to impose some treatment on the least and the most responsive group to obtain bounds on the treatment effect.

Modern methods for heterogeneous treatment effects and policy learning, such as causal forests or empirical welfare learning, rely on post-treatment outcomes to estimate subgroup effects \citep[e.g.,][]{Athey2016PNAS,Kitagawa2018ECTA}. Likewise, behavioral work on treatment moderators relies on comparisons at the post-treatment level \citep{Krefeld2024PNAS}. In other words, the treatment must be imposed on a sample of the entire population before the eventual target subgroups can be identified. We show that the ordering of treatment effects across subgroups can be predicted pre-treatment, using choices and response times from a baseline before any treatment even took place. As such, response times may be used as a pre-outcome data source when developing statistical treatment rules under limited information as in \citet{Manski2004Statistical}.

Let $G_A$ and $G_B$ denote the latent distributions for groups $A$ and $B$, respectively. Before treatment, the probability that an agent in group $j$ chooses $i=1$ is $1 -  G_j(0)$. Suppose that the agents then receive a treatment that favors option $i=1$. We assume that the treatment operates by lowering the decision threshold from \(0\) to some \(x^*\leq 0\). That is, under treatment, an agent chooses \(i=1\) if and only if \(x>x^*\). For example, if the treatment raises the payoff of option \(i=1\) by \(\delta\geq 0\) for all agents, then the induced cutoff is \(x^*=-\delta\). More generally,  our formulation allows for nonlinear treatment effects that vary with the latent value \(x\), provided that their choice implications can be summarized by a shift in the decision threshold. It follows that, under treatment, an agent in group \(j\) chooses \(i=1\) with  probability \(1-G_j(x^*)\). The treatment effect for group \(j\) is therefore $G_j(0) - G_j(x^*) \geq 0$, and it is larger for group $A$ than for group $B$ whenever
\begin{equation}\label{ranktreatcond}
G_B(0) - G_B(x^*) \leq G_A(0) - G_A(x^*).
\end{equation}
Even if we do not know the exact cutoff generated by the treatment, we can still predict that group \(A\) will react more strongly than group \(B\) when we detect that \eqref{ranktreatcond} holds for all \(x^*\leq 0\). This property is invariant to transformations $(\psi_A,\psi_B)$ which satisfy $\psi_A(x)=\psi_B(x)$ for all $x \leq 0$ and $\psi_A(0)=\psi_B(0)=0$. A natural assumption guaranteeing that the chronometric functions are known up to these transformations is that the chronometric effect when choosing $i=0$ is the same for the two groups. Then, under this assumption, the distributions $H_j$ constructed by (\ref{defH}) satisfy the desired inequality if and only if
\begin{equation}\label{ranktreatdetect}
p_B^0 (1 - F_B^0(t)) \leq p_A^0 (1 - F_A^0(t)) \text{ for all } t \in [\underline{t},\overline{t}].
\end{equation}
This condition rests on data from choices of $i=0$ only, as these are the ones that the treatment may potentially shift. The condition requires $p_B^0 \leq p_A^0$ to be satisfied and, additionally, that the choices of group $A$ must be sufficiently slow compared to group $B$. Intuitively, considering the agents who take at least time $t$ to respond picks out those who are relatively close to indifferent, at the resolution defined by $t$. If group $A$ has more of this near-indifferent mass than group $B$ for any $t$, then a treatment that shifts the cutoff to the left converts more mass into $i=1$ choices in group $A$ than in group $B$.\footnote{The literature has documented several times that agents who respond more slowly are more likely to change their behavior over time \citep[e.g.][]{fazio86,bassili93,konovalov2019revealed}, across elicitation modes \citep[e.g.][]{AGKW16}, or in response to counterarguments \citep[e.g.][]{bassili91,huckfeldt99}. However, there are also experiments in which response time does not systematically predict behavioral change \citep{bassili00}. Our contribution here is to derive a precise criterion for predicting treatment heterogeneity and to test this criterion empirically.} Taken together, (\ref{ranktreatdetect}) is a sufficient condition that allows us to rank groups by responsiveness using choices and response times pre-treatment and no further distributional assumptions.

If an analyst has additional information that bounds the strength of the treatment and thereby rules out sufficiently low cutoffs \(x^*\), the detection condition can be weakened. For example, the analyst may have prior knowledge that the treatment is not strong enough to shift the decisions of the \(q\%\) most committed supporters of option \(i=0\). Inequality (\ref{ranktreatdetect}) then needs to be satisfied only for all $t \in [t_q,\overline{t}]$, where $t_q$ is defined by $F^0(t_q)=q/100$. Thus, the bound on the strength of the treatment translates into a natural weaker condition: we only need to compare the near-indifferent masses at sufficiently slow response times. Consequently, and in contrast to the unbounded case, the detection condition does not necessarily require $p_B^0 \leq p_A^0$, that is, we could have a larger number of agents favoring $i=0$ in group $B$ than in group $A$ and still predict a stronger treatment response from group $A$.

We put these results to work using data from \citet{Krefeld2024PNAS}. In their Study 3, a total of 1'200 subjects recruited on Prolific and MTurk responded to the Trolley Problem. In this moral dilemma, a runaway trolley threatens the lives of five people. In the \textit{bridge case}, which \citet{Krefeld2024PNAS} label as the baseline condition, the only way to stop the trolley is by pushing a man onto the tracks. In the \textit{track case}, which they label as the treatment condition, pulling a lever diverts the trolley onto a side track, killing one instead. While the two scenarios are identical in that an intervention results in only one death instead of five, they are typically judged differently by subjects. In the data of \citet{Krefeld2024PNAS}, only 43.3\% of all subjects report that pushing the man onto the tracks in the bridge case is morally admissible, while 83.6\% of all subjects report that pulling the lever in the track case is morally admissible. \citet{Krefeld2024PNAS} are interested in treatment heterogeneity at the subject level. They develop a factor model encompassing the effect moderators Fluid Intelligence (F), Attentiveness (A), Crystallized Intelligence (C), and Experience (E). They find that attention is the only moderator that has a statistically significant influence on the treatment effect, with more attentive subjects reacting more strongly.

Our goal is to use subjects' responses and response times in the bridge case to predict how the treatment effect of moving to the track case differs across subgroups. In terms of our above model, option $i=1$ corresponds to the response that the intervention is morally admissible, which goes up with the treatment. We construct pairs of groups using the demographic information available in the study as well as the four factors that \citet{Krefeld2024PNAS} consider as moderators. To predict which of two groups has the stronger treatment effect, we test condition \eqref{ranktreatdetect} in both directions. If only one of these hypotheses is rejected, we predict that the treatment effect is larger for the respective group. If both or none of the hypotheses are rejected, we do not make a prediction.\footnote{The hypothesis test relies on a bootstrapping procedure as described in Section \ref{subsec.app.happy}. Our results in Subsection \ref{subsec.stoch} ensure the validity of our tests even when there is noise or measurement error that is identically distributed across the two subgroups being compared. We again log-normalize response times. We proceed as in Section \ref{sec.app.frame} and use the total duration it took subjects to complete the study net of the focal question's response time.} We then check the correctness of our prediction using the actual data from the track case.

To form subgroups, we split the sample at the median along the respective dimension and compare the top and bottom halves with each other. For gender, we compare women and men. Table \ref{tab:treatment_pred_summary} reports p-values, predictions and actual outcomes for all pairs of groups. In 8 out of 9 cases, we are able to make a prediction at the 5\% significance level. All of these predictions turn out to be correct.\footnote{If one instead reverses the labeling of baseline and treatment, using track-case choices and response times for the prediction and the bridge case as the comparison condition, the method yields 7 rather than 8 predictions at the 5\% level. All of them are correct. There is no prediction for education.} In particular, it is instructive to compare our predictions for the FACE moderators to \citet{Krefeld2024PNAS}. They find that more attentive subjects exhibit a significantly stronger treatment effect, while not documenting any significant effects for the other three moderators. Our analysis, based on pre-treatment data only, reveals the same prediction for attention. Moreover, our (significant) predictions for the other three moderators broadly agree with the (insignificant) signs reported in \citet[Table S17]{Krefeld2024PNAS}.\footnote{More precisely, in their specification with only the four moderators, the signs agree for all four moderators. In their specification with the moderators and a dummy for the panel (Prolific/MTurk) the signs agree for attention, crystallized intelligence, and fluid intelligence, but not for experience.}

\begin{table}[t]
\centering
\begin{tabular}{@{} l c c c c @{}} 
\toprule
Dimension (group $A$ vs. $B$) & $p(A>B)$ & $p(B>A)$ & Prediction & Correct \\
\midrule
Gender (women vs men)                 & 0.866 & 0.000 & $A$ stronger & True \\
Politics (liberal vs conservative)    & 0.981 & 0.000 & $A$ stronger & True \\
Income (poor vs rich)                 & 0.219 & 0.303 &     -       &   - \\
Education (low vs high)               & 0.807 & 0.000 & $A$ stronger& True \\
Age (young vs old)                    & 0.000 & 0.409 & $B$ stronger & True \\
Face F (top vs bottom)                & 1.000 & 0.000 & $A$ stronger & True \\
Face A (top vs bottom)                & 0.951 & 0.000 & $A$ stronger & True \\
Face C (top vs bottom)                & 0.000 & 0.835 & $B$ stronger & True \\
Face E (top vs bottom)                & 0.000 & 0.890 & $B$ stronger & True \\
\bottomrule
\end{tabular}
\caption{Testing the ranking condition in \eqref{ranktreatdetect} in both directions.}
\label{tab:treatment_pred_summary}
\vspace{3mm}
\begin{minipage}{\textwidth}
\small\textit{Notes:} The table reports p-values from testing the sufficient pre-treatment ranking condition in \eqref{ranktreatdetect} in both directions for each pair of subgroups (group $A$ vs.\ $B$). We predict which subgroup exhibits the stronger treatment effect when exactly one direction is rejected at the 5\% level; otherwise no prediction is made. Subgroups are formed by median splits for demographics and the FACE moderators (top vs.\ bottom halves).
\end{minipage}
\end{table}

Our approach does not yield a prediction for the subgroup split based on income, because neither direction of inequality \eqref{ranktreatdetect} can be rejected. A possible way forward is to assume that the treatment effect is bounded and to exclude the most committed (fastest) subjects from the test. However, even in that case we do not obtain a prediction for income at the 5\% significance level.


Overall, our analysis shows that response times can be used to anticipate heterogeneous treatment effects before the intervention even takes place. By using our detection condition \eqref{ranktreatdetect} as a testable inequality, we can successfully rank groups by anticipated responsiveness using only baseline choices and response times.

\section{Conclusion}\label{sec.concl}

The goal of this paper is to provide a systematic account of the information that response time data contain. We approach the problem by phrasing it as one of identification in the context of binary response models. Our main theoretical result relates the set of identifiable distributional properties to the set of admissible chronometric functions. The fundamental idea is that the joint distribution of responses and response times identifies a composition of the latent distribution and the chronometric function. Properties of the distribution that are preserved under a given set of transformations can therefore be identified if the chronometric function is known up to these transformations. Several existing results in the literature follow as corollaries and can be generalized, and many new results emerge. We demonstrate the usefulness of our approach in three empirical applications: identifying an optimal nudge, testing decreasing marginal happiness of income, and predicting treatment heterogeneity. In each case, incorporating response times yields new economic insights that cannot be obtained from conventional methods alone.

Our applications in Section \ref{sec.app} are merely examples of the scope of the method and not an exhaustive list. Appendix \ref{sec.theory.app} presents additional distributional properties that can be studied within our framework, including the ranking of inequality or dispersion, likelihood-ratio dominance, unimodality, and correlations between latent variables and observable characteristics. These properties arise naturally in many empirical settings. Applications that we have in mind include the study of polarization using surveys on political attitudes, evaluating the magnitude of under-reporting on sensitive views based on behavioral records in the lab, and optimal product pricing using data on purchase decisions from online platforms. There are also several possible extensions of our framework that merit further investigation. These include correlations among multiple latent variables, the case with more than two choice options, and the use of response times when they are affected by additional factors like player types or decision modes as in \cite{rubinstein2007instinctive,rubinstein2013response,rubinstein2016typology}. 

\newpage

\bibliographystyle{aea}
\bibliography{Cardinal}

\newpage

\section*{Appendices}

\begin{appendix}

\section{Rationalizability}\label{App.Rat}

In this appendix, we study the question of \textit{rationalizability} of the data:\ when $\mathscr{G}^*$ and $\mathscr{C}^*$ are both restricted sets of distributions and chronometric functions, respectively, does there exist $((G_j)_j,(c_j)_j) \in \mathscr{G}^* \times \mathscr{C}^*$ that induces the given data $(p_j,F_j)_j$?

We give an answer to this question for the case in which both $\mathscr{G}^*$ and $\mathscr{C}^*$ are generated using sets of transformations with properties like before. Formally, let $\mathscr{C}^*$ be the \textit{maximal set} generated by $((c_j^*)_j,\Psi)$, that is,
\[\mathscr{C}^* = \{ (c_j)_j \in \mathscr{C} \mid (c_j)_j = (c_j^* \circ \psi_j)_j \text{ for some } (\psi_j)_j \in \Psi\}.\] 
In words, not only can every $(c_j)_j \in \mathscr{C}^*$ be derived from $(c_j^*)_j$ using a composition with some $(\psi_j)_j \in \Psi$, but the composition of $(c_j^*)_j$ with any $(\psi_j)_j \in \Psi$ yields an element of $\mathscr{C}^*$. This requires that $\psi_j(0)=0$ holds for all transformations in $\Psi$. Analogously, let
\[\mathscr{G}^* = \{ (G_j)_j \in \mathscr{G} \mid (G_j)_j = (G_j^* \circ \phi_j)_j \text{ for some } (\phi_j)_j \in \Phi\}\] 
be the maximal set generated by $((G_j^*)_j,\Phi)$. For example, if $(G^*_j)_j$ are strictly increasing cdfs and $\Phi$ is the set of all profiles of strictly increasing transformations, then $\mathscr{G}^*$ becomes the set of all profiles of strictly increasing cdfs. If, in addition, $(G^*_j)_j$ are symmetric around their means, then combined with a suitably defined set of symmetric transformations we obtain a set of profiles of distributions that are also symmetric.

For the following result, denote by $\Phi \circ \Psi^{-1}=\{ (\phi_j \circ \psi_j^{-1})_j \mid (\phi_j) _j \in \Phi \text{ and } (\psi)_j \in \Psi \}$ the set of all profiles of functions which are compositions of the elements of $\Phi$ and the inverse elements of $\Psi$. Any such function is bijective and strictly increasing, hence continuous.

\begin{theorem}\label{t2} Let $\mathscr{C}^*$ and $\mathscr{G}^*$ be the maximal sets generated by $((c^*_j)_j,\Psi)$ and $((G^*_j)_j,\Phi)$, respectively. Then, data $(p_j,F_j)_j$ is rationalizable if and only if there exists $(L_j)_j \in \Phi \circ \Psi^{-1}$ such that $(G_j^* \circ L_j)_j=(H_j)_j$.
\end{theorem}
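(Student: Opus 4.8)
The plan is to leverage the identity $H_j = G_j \circ \psi_j^{-1}$ that is established inside the proof of Theorem \ref{t1}: whenever a data-inducing distribution $G_j$ is paired with a chronometric function that factors as $c_j = c_j^* \circ \psi_j$, the empirical function $H_j$ built from the data via \eqref{defH} satisfies $H_j = G_j \circ \psi_j^{-1}$. Because $\mathscr{C}^*$ and $\mathscr{G}^*$ are the \emph{maximal} sets generated by $((c_j^*)_j,\Psi)$ and $((G_j^*)_j,\Phi)$, membership in these sets is \emph{equivalent} to admitting such a factorization with a profile drawn from $\Psi$, respectively $\Phi$; I would also recall at the outset that every $\psi_j \in \Psi$ fixes the origin, $\psi_j(0)=0$, so that $c_j^* \circ \psi_j$ is again a legitimate chronometric function.

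For the ``only if'' direction I would start from a rationalizing pair $((G_j)_j,(c_j)_j) \in \mathscr{G}^* \times \mathscr{C}^*$. Maximality supplies $(\psi_j)_j \in \Psi$ with $c_j = c_j^* \circ \psi_j$ and $(\phi_j)_j \in \Phi$ with $G_j = G_j^* \circ \phi_j$. Plugging these into the Theorem \ref{t1} identity gives $H_j = G_j \circ \psi_j^{-1} = G_j^* \circ (\phi_j \circ \psi_j^{-1})$, so the profile $L_j := \phi_j \circ \psi_j^{-1}$ lies in $\Phi \circ \Psi^{-1}$ and satisfies $(G_j^* \circ L_j)_j = (H_j)_j$, as desired.

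For the ``if'' direction I would reverse this construction. Given $(L_j)_j \in \Phi \circ \Psi^{-1}$ with $(G_j^* \circ L_j)_j = (H_j)_j$, fix a decomposition $L_j = \phi_j \circ \psi_j^{-1}$ with $(\phi_j)_j \in \Phi$ and $(\psi_j)_j \in \Psi$, and set $c_j := c_j^* \circ \psi_j$ and $G_j := G_j^* \circ \phi_j$. Maximality immediately places $(c_j)_j \in \mathscr{C}^*$ and $(G_j)_j \in \mathscr{G}^*$, and by construction $G_j = (G_j^* \circ L_j)\circ \psi_j = H_j \circ \psi_j$. The remaining and only substantive task is to check that this pair actually induces $(p_j,F_j)_j$. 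I would do this by running the Theorem \ref{t1} computation backwards: using $(c^0_j)^{-1}(t) = \psi_j^{-1}((c^{*0}_j)^{-1}(t))$ together with $G_j = H_j \circ \psi_j$ yields $G_j((c^0_j)^{-1}(t)) = H_j((c^{*0}_j)^{-1}(t))$, and the defining formula \eqref{defH} for $H_j$, evaluated at the nonpositive argument $(c^{*0}_j)^{-1}(t)$ where $c^*_j((c^{*0}_j)^{-1}(t))=t$, collapses this to $p_j^0 F_j^0(t)$, which is exactly \eqref{gen0}; the symmetric argument on $\mathbb{R}_+$ delivers \eqref{gen1}. Matching of the response probabilities then follows by evaluating at $t=\overline{t}$, using $c_j^*(0)=\overline{t}$ and $F_j^i(\overline{t})=1$.

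The main obstacle I anticipate is purely the boundary bookkeeping in the ``if'' direction when a chronometric function attains its minimum $\underline{t}$ at a finite argument. There $(c^{*0}_j)^{-1}(\underline{t})$ is finite and one must confirm that $G_j$ vanishes to its left, consistently with $F_j^0(\underline{t})=0$; this is precisely the degenerate case already handled in the proof of Theorem \ref{t1}, so I would cite that argument rather than reproduce it. The analogous edge cases $p_j^i=0$ are innocuous, since \eqref{gen0}--\eqref{gen1} then hold trivially. Everything else is a direct substitution, so the crux of the theorem is really the observation that the two maximal generating structures compose into the single set $\Phi \circ \Psi^{-1}$ of admissible distortions of the representative distribution.
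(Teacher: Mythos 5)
Your proposal is correct and follows essentially the same route as the paper's proof: both directions use the identity $H_j = G_j \circ \psi_j^{-1}$ from Theorem \ref{t1}, the decomposition $L_j = \phi_j \circ \psi_j^{-1}$, and maximality to place the constructed $(G_j)_j$ and $(c_j)_j$ back in $\mathscr{G}^*$ and $\mathscr{C}^*$, with the induced-data computation in the ``if'' direction being the same chain of substitutions the paper writes out. Your explicit handling of the boundary case where $c_j$ reaches $\underline{t}$ and of the probability matching at $t=\overline{t}$ is consistent with how the paper delegates those details to the proof of Theorem \ref{t1}.
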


\begin{proof} \textit{Only-if-statement.} Suppose there exists $((G_j)_j,(c_j)_j) \in \mathscr{G}^* \times \mathscr{C}^*$ that induces $(p_j,F_j)_j$. As shown in the proof of Theorem \ref{t1}, there exists $(\psi_j)_j \in \Psi$ such that $(H_j)_j=(G_j \circ \psi^{-1}_j)_j$, because $((c^*_j)_j,\Psi)$ generates $\mathscr{C}^*$. It also holds that there exists $(\phi_j)_j \in \Phi$ such that $(G_j)_j=(G^*_j \circ \phi_j)_j$, because $((G^*_j)_j,\Phi)$ generates $\mathscr{G}^*$. Now define $L_j=(\phi_j \circ \psi^{-1}_j)$ for all $j \in J$, so that $(L_j)_j \in \Phi \circ \Psi^{-1}$ holds. Furthermore, \[(G^*_j \circ L_j)_j = (G^*_j \circ \phi_j \circ \psi^{-1}_j)_j = (G_j \circ \psi^{-1}_j)_j=(H_j)_j.\]

\textit{If-statement.} Suppose there exists $(L_j)_j \in \Phi \circ \Psi^{-1}$ that satisfies $(G_j^* \circ L_j)_j=(H_j)_j$. Let $(\phi_j)_j \in \Phi$ and $(\psi_j)_j \in \Psi$ be such that $(L_j)_j=(\phi_j \circ \psi^{-1}_j)_j$. Then $(G_j)_j:=(G^*_j \circ \phi_j)_j \in \mathscr{G}^*$ because $\mathscr{G}^*$ is the maximal set generated by $((G^*_j)_j,\Phi)$, and $(c_j)_j:=(c^*_j \circ \psi_j)_j \in \mathscr{C}^*$ because $\mathscr{C}^*$ is the maximal set generated by $((c^*_j)_j,\Psi)$. The data induced by $((G_j)_j,(c_j)_j)$ are, for all $t \in [\underline{t},\overline{t}]$ and $j \in J$,
\begin{align*}\hat{p}^0_j \hat{F}^0_j(t) & =G_j((c^0_j)^{-1}(t)) \\ & =G^*_j(\phi_j((c^0_j)^{-1}(t))) \\ & =G^*_j(\phi_j(\psi^{-1}_j((c_j^{*,0})^{-1}(t))))\\ & = G^*_j(L_j((c_j^{*,0})^{-1}(t)))\\ & = H_j((c_j^{*,0})^{-1}(t))\\ & = p^0_j F^0_j(t),\end{align*} where the third equality has been established in the proof of Theorem \ref{t1}. The analogous argument shows that $((G_j)_j,(c_j)_j)$ induces also $p^1_j F^1_j(t)$, hence the data is rationalizable.\end{proof}

For rationalizability, we need to check whether there exist functions $(L_j)_j \in \Phi \circ \Psi^{-1}$ which satisfy the implicit condition
\[G^*_j(L_j(x)) = H_j(x) \text{ for all } x \in \mathbb{R}.\] 
This is easier than it may appear. For example, if both $H_j$ and $G^*_j$ are strictly increasing, then the candidate $L_j$ is unique and given by $L_j=(G^*_j)^{-1} \circ H_j$, which is a bijective and strictly increasing function based on observables. It remains to be checked whether this function can be written as an admissible composition $\phi_j \circ \psi^{-1}_j$. This is always the case, for example, if $\Phi$ is the unrestricted set of all profiles of transformations, because $\Phi \circ \Psi^{-1}$ is unrestricted in that case as well. In other cases, the function $L_j$ will be unique in some intervals and can be extended outside these intervals in a way that guarantees bijectivity and monotonicity. One then needs to check whether $L_j$ coincides with an admissible composition $\phi_j \circ \psi^{-1}_j$ wherever it is uniquely defined.

Other cases are even easier. For example, if $H_j$ takes the value $0$ for finite values $x<0$ but $G_j^*$ does not, then a function $L_j$ satisfying $G^*_j \circ L_j = H_j$ cannot be strictly increasing and hence cannot be a composition $\phi_j \circ \psi^{-1}_j$. This implies that the data is not rationalizable. Intuitively, since $G^*_j$ extends to infinity and the data has no atoms at response time $\underline{t}$, it can only be rationalized by a chronometric function that satisfies $c_j(x)>\underline{t}$ for all $x<0$, and hence $H_j$ cannot reach $0$. 

\section{Robustness to Heterogeneity and Noise}\label{App.Stoch}

As described in the main text, we model response time in the decision problem of interest as $t=c_j(x) \cdot \eta \cdot \epsilon$ and in the baseline problem as $t_b = \phi \cdot \eta$, where $\eta > 0$ captures the individual's general response speed, $\epsilon > 0$ comprises different sources of noise, and $\phi>0$ parameterizes the baseline problem. The terms $x$ and $\eta$ may be correlated and follow a $j$-specific distribution. The cdf of the marginal distribution of $x$ is denoted by $G_j(x)$ and is assumed to be continuous. The terms $\epsilon$ and $\phi$ may also be correlated and follow a $j$-specific distribution described by a joint probability measure $\mu_j$, but they are assumed to be independent of the other variables.

In a slight deviation from the main model, we let $\underline{t}=0$ and $\overline{t}=\infty$, as heterogeneity and noise may spread out response times to arbitrary values. The chronometric function $c_j: \mathbb{R} \rightarrow [0,\infty)$ is described by $c_j^0: (-\infty,0) \rightarrow [0,\infty)$ and $c_j^1: (0,+\infty) \rightarrow [0,\infty)$, assumed to be continuous, strictly increasing/decreasing throughout, and to approach the respective limits asymptotically. Continuity of $G_j(x)$ implies that $x=0$ is a zero probability event and allows us to leave $c_j(0)$ unspecified.

We consider normalized response times $\hat{t}= t/ t_b = c_j(x) \cdot \epsilon / \phi $ and denote the cdfs of their distributions (conditional on choice) by $\hat{F}_j^i$. The model $(G_j,c_j,\mu_j)_j$ then generates the data
\[p_j^0 \hat{F}_j^0(t)= \int_{\text{supp } \mu_j} G_j((c_j^0)^{-1}(t \phi / \epsilon)) \, d \mu_j(\phi,\epsilon)\] and
\[p_j^1 \hat{F}_j^1(t)= 1 - \int_{\text{supp } \mu_j} G_j((c_j^1)^{-1}(t \phi / \epsilon)) \, d \mu_j(\phi,\epsilon),\] for all $t \geq 0$ and $j \in J$.

Suppose the analyst relies on the representative chronometric function $c_j^*(x)=1/|x|$ and therefore constructs the empirical distributions \begin{equation}\label{defHhat}\hat{H}_j(x) = \begin{cases} 1 - p^1_j \hat{F}^1_j(1/x) & \text{if } x>0, \\ p_j^0 & \text{if } x=0, \\p^0_j \hat{F}^0_j(-1/x) & \text{if } x < 0,\end{cases}\end{equation} from those data, for all $j \in J$. The following result shows how these distributions, which are misspecified as they ignore heterogeneity and noise, are related to the model fundamentals.

\begin{proposition}\label{apppropnoise} Given $(G_j,c_j,\mu_j)_j$, the distributions $(\hat{H}_j)_j$ defined in (\ref{defHhat}) can be written as \begin{equation}\label{resHhat}\hat{H}_j(x)= \int_{\text{supp } \mu_j} K_j(x \epsilon/\phi) \, d \mu_j(\phi,\epsilon),\end{equation} where $(K_j)_j = (G_j \circ \psi_j^*)_j$ with \begin{equation*}\psi_j^*(x) = \begin{cases} (c_j^1)^{-1}(1/x)& \text{if } x>0, \\ 0 & \text{if } x=0, \\ (c^0_j)^{-1}(-1/x) & \text{if } x < 0,\end{cases}\end{equation*} for all $j \in J$.\end{proposition}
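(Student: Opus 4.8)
The plan is to verify the claimed identity directly, distinguishing the three cases $x>0$, $x=0$, and $x<0$ that appear in the piecewise definitions of both $\hat{H}_j$ and $\psi_j^*$. The key structural observation is that $\psi_j^*$ is exactly the composition $(c_j^i)^{-1}\circ c_j^{*,i}$ encountered in the proof of Theorem \ref{t1}, now specialized to the representative function $c_j^*(x)=1/|x|$ (so that $c_j^{*,1}(x)=1/x$ for $x>0$ and $c_j^{*,0}(x)=-1/x$ for $x<0$). Consequently $K_j=G_j\circ\psi_j^*$ is precisely the empirical distribution the analyst would recover in the noiseless benchmark via the logic of Theorem \ref{t1}, and the proposition asserts that the observed $\hat{H}_j$ is a multiplicative-convolution smearing of $K_j$ induced by the noise ratio $\epsilon/\phi$.

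For $x>0$ I would substitute $t=1/x$ into the data-generating expression for $p_j^1\hat{F}_j^1$ and insert the result into $\hat{H}_j(x)=1-p_j^1\hat{F}_j^1(1/x)$. The two leading constants cancel, leaving $\int G_j\big((c_j^1)^{-1}((\phi/\epsilon)/x)\big)\,d\mu_j(\phi,\epsilon)$. Since $\epsilon,\phi>0$ we have $x\epsilon/\phi>0$, so the first branch of $\psi_j^*$ applies and $\psi_j^*(x\epsilon/\phi)=(c_j^1)^{-1}\big(1/(x\epsilon/\phi)\big)=(c_j^1)^{-1}(\phi/(x\epsilon))$, which matches the integrand and yields $\hat{H}_j(x)=\int K_j(x\epsilon/\phi)\,d\mu_j(\phi,\epsilon)$. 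For $x<0$ the argument is symmetric: substitute $t=-1/x>0$ into the expression for $p_j^0\hat{F}_j^0$ to obtain $\int G_j\big((c_j^0)^{-1}(-(\phi/\epsilon)/x)\big)\,d\mu_j(\phi,\epsilon)$, and note that $x\epsilon/\phi<0$ selects the third branch of $\psi_j^*$, giving $\psi_j^*(x\epsilon/\phi)=(c_j^0)^{-1}(-\phi/(x\epsilon))$, again matching the integrand.

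The boundary case $x=0$ is handled separately: $\psi_j^*(0)=0$ gives $K_j(0)=G_j(0)=p_j^0$, where $G_j(0)=p_j^0$ is the binary-response identity, so that $\int K_j(0)\,d\mu_j(\phi,\epsilon)=p_j^0=\hat{H}_j(0)$; here continuity of $G_j$ makes $x=0$ a null event, so $c_j(0)$ never needs to be specified. I do not expect a deep obstacle in this computation; the only care required is sign bookkeeping—ensuring that $x\epsilon/\phi$ inherits the sign of $x$, so that the correct branch of $\psi_j^*$ is paired with the correct piecewise data formula—together with the routine simplification of reciprocals $1/(x\epsilon/\phi)=\phi/(x\epsilon)$ and the use of the binary-response boundary value at $x=0$.
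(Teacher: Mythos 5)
Your proposal is correct and follows essentially the same route as the paper's proof: a case-by-case verification for $x>0$, $x=0$, and $x<0$ that substitutes the data-generating expressions for $p_j^i\hat{F}_j^i$ into the definition of $\hat{H}_j$ and matches each integrand with the corresponding branch of $\psi_j^*$ evaluated at $x\epsilon/\phi$. The paper merely runs the computation in the opposite direction (starting from $\int K_j(x\epsilon/\phi)\,d\mu_j$ and arriving at $\hat{H}_j(x)$), which is an immaterial difference.
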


\begin{proof} Consider any $j \in J$. For $x=0$ we have
\begin{align*}\int_{\text{supp } \mu_j} K_j(x \epsilon/\phi) \, d \mu_j(\phi,\epsilon) =  K_j(0) =G_j(0)=p_j^0=\hat{H}_j(0).\end{align*}
For any $x < 0$ we have
\begin{align*}\int_{\text{supp } \mu_j} K_j(x \epsilon/\phi) \, d \mu_j(\phi,\epsilon) & = \int_{\text{supp } \mu_j} G_j((c_j^0)^{-1}(-(1/x)(\phi/\epsilon))) \, d \mu_j(\phi,\epsilon) \\ & = p_j^0 \hat{F}_j^0(-1/x) = \hat{H}_j(x).\end{align*} For any $x > 0$ we analogously have
\begin{align*}\int_{\text{supp } \mu_j} K_j(x \epsilon/\phi) \, d \mu_j(\phi,\epsilon) & = \int_{\text{supp } \mu_j} G_j((c_j^1)^{-1}((1/x)(\phi/\epsilon))) \, d \mu_j(\phi,\epsilon) \\ & = 1 - p_j^1 \hat{F}_j^1(1/x)  = \hat{H}_j(x),\end{align*} which establishes (\ref{resHhat}). \end{proof}

Proposition \ref{apppropnoise} extends the main insight from the proof of Theorem \ref{t1} to the case with noise. Indeed, if there is no noise (i.e., $\epsilon/\phi$ is fixed at $1$), then it implies that $\hat{H}_j = K_j = G_j \circ \psi_j^*$, where $\psi_j^*$ corresponds to $\psi_j^{-1}$ in the proof of Theorem \ref{t1}. 

In the general case, the result can be applied to check whether $(\hat{H}_j)_j$ is suitable for testing the hypothesis that $(G_j)_j$ satisfies a property $\mathbf{P}$. The first requirement is that $\mathbf{P}$ of $(G_j)_j$ is invariant to $\psi_j^*$, so that $(K_j)_j$ also satisfies $\mathbf{P}$. Without knowledge of the chronometric function beyond monotonicity, this requires invariance to all increasing transformations. With the knowledge of symmetry, it requires invariance to symmetric transformations, and so on. The second (and crucial) requirement is invariance of $\mathbf{P}$ of $(K_j)_j$ to the multiplicative convolutions (\ref{resHhat}), so that $(\hat{H}_j)_j$ inherits $\mathbf{P}$. 

To illustrate the argument, consider the property of a pair $(G_1,G_2)$ of distributions that $G_1(-x) + G_1(x) \leq G_2(-x) + G_2(x)$ for all $x \geq 0$, which implies that the mean of $G_1$ is larger than the mean of $G_2$ and which we use in our applications in Section \ref{sec.app}. This property is invariant to transformations that are symmetric around zero and identical for the two groups. The property is then also invariant to (\ref{resHhat}) for noise measures $\mu_j$ that are identical for the two groups. The reason is that the inequality holds pointwise for each $(\phi,\epsilon)$ in the integrals. Consequently, $(\hat{H}_1,\hat{H}_2)$ can be used to test the property under the assumption of symmetric and common chronometric functions and common noise distributions.

\section{Ranking of Means}\label{App.Rank}

In this appendix, we show that the inequality $G_A(x)+G_A(-x) \leq G_B(x) + G_B(-x)$ for all $x \in \mathbb{R}_+$ implies that the mean of $G_A$, denoted $\mu_A$, is larger than the mean of $G_B$, denoted $\mu_B$ (assuming that both means exist). Using the fact that \[\mu_j = -\int_{-\infty}^0 G_j(x) dx + \int_{0}^{+\infty} [1-G_j(x)] dx,\] 
we have
\begin{align*}
\mu_A - \mu_B = \, & \int_{-\infty}^0[G_B(x)-G_A(x)]dx + \int_{0}^{+\infty}[1-G_A(x)-1+G_B(x)]dx \\ 
= \, & \int_{0}^{+\infty}[G_B(-x)-G_A(-x)]dx + \int_{0}^{+\infty}[G_B(x)-G_A(x)]dx\\
= \, & \int_{0}^{+\infty}[G_B(x) + G_B(-x)-G_A(x)-G_A(-x)]dx \geq 0.
\end{align*} 

\section{Further Applications}\label{sec.theory.app}

This appendix develops further applications of Theorem \ref{t1} with distributional properties that arise in a variety of empirical settings. For each property, we identify the relevant invariance class and characterize the conditions under which detection is feasible using response times. Although we do not implement these applications empirically, the results provide a template for future work.

\subsection{Inequality}\label{sec:lorenz}

Our method can be used to study inequality or dispersion of a latent variable, which has applications across multiple fields. For instance, within the literature on subjective well-being, there is a substantial interest in understanding the inequality of happiness \citep[e.g.,][]{stevenson2008happiness}. Similarly, researchers have studied societal polarization by measuring the dispersion of individual attitudes towards social and political issues \citep{dimaggio1996have, evans2003have}. In the context of market competition, the spread of consumer preferences has direct implications for the optimal pricing and advertising strategies of firms \citep{johnson2006simple,hefti2022preferences}. Unfortunately, standard measures of inequality like the Gini index require cardinal information, which makes their application to ordered response data questionable \citep[see the discussion in][]{dutta2013inequality}. Response times may serve as a source of cardinal information even when decisions are binary, such as whether to buy or not buy a product \citep{cotet2025deliberation}.

The Lorenz curve is a convenient graphical representation of a distribution's inequality, and interesting measures of inequality like the Gini index are based on the Lorenz curve \citep{atkinson1970measurement,cowell2011measuring}. For any distribution $G$, the Lorenz curve is defined by
\[L(q, G)=\frac{\int_0^q G^{-1}(x)dx}{\int_0^1 G^{-1}(x)dx} \; \text{ for all } q \in [0,1],\]
where $G^{-1}(x):= \inf\{y \mid G(y) \geq x\}$ denotes the left inverse of $G$. In the context of subjective well-being, $L(q, G)$ could be understood as the proportion of total happiness allocated to the least happy $100q$ percent of the population. How far the Lorenz curve falls below the 45-degree line is an indication of how unequal the distribution is. The curve is invariant to all linear transformations of the distribution $G$. Therefore, if we plot the Lorenz curve for an empirical function $H$ constructed as in (\ref{defH}) with any representative function $c^*$, exactly this Lorenz curve (and any measure based on it like the Gini index) is detected for the class of chronometric functions that are linearly generated from $c^*$. One can repeat this procedure for various different representative functions $c^{k*}$ to obtain bounds on the true Lorenz curve for larger sets of chronometric functions.


Beyond measuring inequality within a single population, some applications require comparing inequality across groups. The subjective well-being literature has been interested in how inequality of happiness changed over time and across nations \citep{kalmijn2005measuring, stevenson2008happiness,dutta2013inequality}. There is also a great interest in political polarization trends \citep{dimaggio1996have, evans2003have}. Furthermore, firms have an interest in learning about changes in the spread of consumer preferences, to optimally adapt their pricing and advertising strategies  \citep{johnson2006simple,hefti2022preferences}. Our method applies to such comparative analyses as well.

Consider first the problem of detecting whether $G_1$ has a smaller variance than $G_2$, which is a convenient way of ranking the dispersion of distributions due to its ability to provide a complete order. This property is invariant only to transformations $(\psi_1, \psi_2)$ that are linear and additionally satisfy $\psi_1=\psi_2$, which may be too demanding. Comparing instead the inequality based on Lorenz curves has the benefit of avoiding direct scale comparisons, as the two distributions' Lorenz curves are invariant to transformations $(\psi_1, \psi_2)$ that are linear but not necessarily identical. For example, we can say that $G_1$ Lorenz-dominates $G_2$ if $L(q, G_1)\geq L(q, G_2)$ holds for all $q\in [0,1]$ \citep[see][]{shaked2007stochastic}. Then, under the assumption of knowing the chronometric functions up to group-specific linear transformations, detecting a Lorenz-dominance relationship between $G_1$ and $G_2$ (or the absence thereof) is equivalent to checking whether the property holds for the respective empirical functions $(H_1, H_2)$.

Another measure that can be used for assessing  the relative dispersion of distributions is the concept of single-crossing dominance \citep{diamond1974increases,hammond1974simplifying}. Formally, $G_1$ single-crossing dominates $G_2$ if there exists $x^\ast$ such that $G_1(x) \leq G_2(x)$ if $x \leq x^\ast$ and $G_1(x) \geq G_2(x)$ if $x \geq x^\ast$. Intuitively, the condition reflects that $G_1$ assigns less probability weight to values at the tails of the distribution compared to $G_2$. Although the property of single-crossing dominance is sensitive to group-specific transformations, it has the advantage of being robust to non-linear transformations. Just like FOSD, the property (and its violation) is invariant to all profiles of transformations $(\psi_1, \psi_2)$ that satisfy $\psi_1=\psi_2$. Therefore, according to Theorem \ref{t1}, single-crossing dominance is detected for a relatively large class of chronometric functions if $H_1$ single-crossing dominates $H_2$, where $(H_1, H_2)$ are constructed from data using \eqref{defH}, and a violation thereof is detected otherwise. The condition can again be expressed directly in terms of the data. It requires that $p_1^iF_1^i(t)$ and $p_2^iF_2^i(t)$ cross at most once for one of the two responses $i \in \{0,1\}$ and not at all for the other response.

Finally, it is known that either Lorenz-dominance or single-crossing dominance imply second-order stochastic dominance (SOSD) when $G_1$ has a higher mean than $G_2$ \citep{thistle1989ranking}. Therefore, combining several of the results derived so far allows us to detect SOSD.

\subsection{Likelihood-Ratio Dominance}

Consider the property that $G_1$ likelihood-ratio dominates $G_2$, defined by the inequality
\[
(G_1(x)-G_1(x''))(G_2(x)-G_2(x')) \leq (G_1(x)-G_1(x'))(G_2(x)-G_2(x''))
\]
for all $x'' < x' < x$ \citep[see][]{wang2024weighted}. If $G_1$ and $G_2$ are absolutely continuous, the property can be equivalently expressed in terms of their density functions $g_1$ and $g_2$ as $g_1(x')g_2(x) \leq g_1(x)g_2(x')$ for any $x' < x$ \citep[see][]{shaked2007stochastic}. Likelihood-ratio dominance, which is stronger than FOSD, has proven useful in a range of economic applications that involve monotone comparative statics under uncertainty \citep{milgrom1981good, athey2002monotone}. Suppose that the goal is to maximize the expected value of a function $\pi(y,x)$ by choice of $y$ when $x$ is distributed according to $G_j$. If $\pi(y,x)$ satisfies a single-crossing property and $G_1$ likelihood-ratio dominates $G_2$, then the optimal choice of $y$ is larger for $G_1$ than for $G_2$ \citep[under appropriate technical conditions, see][]{athey2002monotone}.  For example, a firm may introduce a new product in two different countries, each of which is characterized by a distribution of consumer types. Higher types imply a higher demand for the product in a way that the firm's profit $\pi(p,x)$ as a function of price $p$ and type $x$ is single-crossing. If the firm's market research (e.g., a survey that asks potential consumers whether they like the product) allows the detection of likelihood-ratio dominance of the type distributions of the two countries, the firm knows in which country to charge a higher price. Other applications of the concept include optimal investment decisions and bidding in auctions.

Like FOSD, likelihood-ratio dominance (and its violation) is invariant to all profiles of transformations $(\psi_1, \psi_2)$ that satisfy $\psi_1=\psi_2$. We thus proceed as for FOSD by constructing $(H_1, H_2)$ from data and then verifying whether $H_1$ likelihood-ratio dominates $H_2$. For example, checking the respective inequalities for all $x'' < x' < x \leq 0$ can be expressed in terms of the data as
\[
(p_1^0F_1^0(t)-p_1^0F_1^0(t''))(p_2^0F_2^0(t)-p_2^0F_2^0(t')) \leq (p_1^0F_1^0(t)-p_1^0F_1^0(t'))(p_2^0F_2^0(t)-p_2^0F_2^0(t''))
\]
for all $t'' < t' < t$. The full condition is particularly easy to express when $p_j^i>0$ and the response time distributions $F^i_j$ admit strictly positive densities $f^i_j$. In that case, we detect likelihood-ratio dominance if the empirical likelihood-ratio
\[\frac{p^i_1f_1^i(t)}{p^i_2f_2^i(t)}\]
is weakly increasing in $t$ for $i=0$ and weakly decreasing in $t$ for $i=1$, and
\[\frac{p^0_1f_1^0(\overline{t})}{p^0_2f_2^0(\overline{t})} \leq \frac{p^1_1f_1^1(\overline{t})}{p^1_2f_2^1(\overline{t})}\]
holds. Otherwise, a violation of likelihood-ratio dominance is detected.

Our approach extends analogously to the detection of hazard-rate dominance and reversed hazard-rate dominance, both of which lie between likelihood-ratio dominance and FOSD, and which are also commonly used in applied settings \citep[e.g.,][]{kiefer1988economic,maskin2000asymmetric,wang2024weighted}.

\subsection{Unimodality}

Consider the property that the distribution is unimodal with mode at zero, i.e., $G$ is convex below zero and concave above zero, and strictly so except when $G(x) \in \{0,1\}$. Unimodality is of interest once more in political applications where $G$ describes the distribution of political attitudes in a population. Unimodality reflects a centered population where more extreme positions receive less support. An ongoing debate in political science concerns whether political attitudes follow unimodal distributions or are polarized and better described by bimodal distributions \citep[see][]{lelkes2016,vaeth23}. Analysts often want to learn about these properties from survey responses, but as \cite{vaeth23} points out, ordinal responses are inadequate to test for properties like uni- or bimodality of the underlying distribution.

Unimodality is invariant to (sigmoid) transformations that satisfy $\psi(0)=0$ and are weakly convex below zero and weakly concave above zero. Starting from a representative chronometric function $c^*$, such transformations allow us to generate all chronometric functions which are weakly ``more convex'' than $c^*$ on $\mathbb{R}_-$ and on $\mathbb{R}_+$ (because $c^*$ is increasing on $\mathbb{R}_-$ but decreasing on $\mathbb{R}_+$). We will use this insight in a slightly different way than before. Assume that the observed cdfs $F^i$ are strictly increasing on $[\underline{t},\overline{t}]$. Then, construct from the data the representative chronometric function
\begin{equation}\label{cboundary} c^*(x) =\begin{cases} \underline{t} & \text{if } 1 < x,\\ (F^1)^{-1}(1-x) & \text{if } 0 < x \leq 1, \\ (F^0)^{-1}(1+x) & \text{if } -1 \leq x \leq 0,\\ \underline{t} & \text{if } x < -1, \\ \end{cases}\end{equation}
which reaches $c(x)=\underline{t}$ for finite absolute values of $x$. With this function, the empirical distribution $H$ becomes 
\[ H(x) =\begin{cases} 1 & \text{if } 1 < x,\\ 1-p^1+p^1x & \text{if } 0 < x \leq 1, \\ p^0+p^0x & \text{if } -1 \leq x \leq 0,\\ 0 & \text{if } x < -1, \\ \end{cases}\] which is piecewise linear. Applying any strictly sigmoid transformation to (\ref{cboundary}) generates a chronometric function that also attains $c(x)=\underline{t}$ for finite absolute values of $x$ but is piecewise more convex, resulting in a unimodal $H$. By the same logic, applying any inverse-sigmoid transformation generates a piecewise more concave chronometric function and a distribution $H$ that is not unimodal. The constructed function (\ref{cboundary}) therefore delimits sets of chronometric functions for which we can detect or reject unimodality. It follows from Theorem \ref{t1} that unimodality is detected for all those functions that are more convex than (\ref{cboundary}) and rejected for those that are more concave. This approach does not cover all possible chronometric functions but may yield expressive results. For example, if (\ref{cboundary}) plotted from the data is already strongly convex, then it appears unlikely that the true chronometric function is even more convex, and we may be able to reject the assumption of unimodality of the distribution.

\subsection{Correlation}

Our last application expands upon the baseline model by investigating the correlation between a latent variable $x \in \mathbb{R}$ (e.g., happiness, political attitude, or willingness to pay) and an observable variable represented by the index $j \in \mathbb{R}$ (e.g., income, social media usage, or experimental treatments). Such an application can be of value in the context of opinion surveys on topics that are sensitive and where subjects hesitate to provide certain answers that conflict with social norms \citep[e.g.,][]{coffman2017size}, which can result in limited variation in the response data. One may compensate for the lack of power of traditional analysis by replacing response variation with variation in response times. This can also be useful for experimental economists who are interested in the effect of different treatments on subjects' behavior. A null result in behavior does not necessarily establish the absence of an effect. If choices are discrete, such a null result can arise when the effect exists but is not large enough to shift behavior, given the parameters chosen by the experimenter. The effect may still be detectable in response times. The use of response time data for these purposes has already been proposed by \citet{konovalov2019revealed}.

We define a cumulative distribution function $\Gamma$ over the indices $j\in J\subseteq\mathbb{R}$, which is observable, and interpret $(G_j)_j$ as the conditional distributions of $x$ given each $j$. The joint distribution of $(x, j)$ is fully determined by $(G_j)_j$ and $\Gamma$. Further, just like in the baseline model, given observed data $(p_j, F_j)_j$ and a representative profile $(c^*_j)_j$ of chronometric functions, we can derive empirical distribution functions $(H_j)_j$ as defined in \eqref{defH}. These functions together with $\Gamma$ also give rise to a well-defined joint distribution of $(x, j)$.

A first attempt to quantify the association between  $x$ and $j$ is to employ the standard Pearson correlation coefficient
\begin{align}
\label{pearson-1}
    \rho=\frac{\mathit{Cov}(x,j)}{\sqrt{\mathit{Var}(x)\,\mathit{Var}(j)}},
\end{align}
where $\mathit{Cov}$ indicates the covariance function and $\mathit{Var}$ indicates the variance function. The value of this coefficient is invariant to all profiles $(\psi_j)_j$ of positive affine transformations that are identical for all $j\in J$. Consequently, when the chronometric functions are known up to identical linear transformations, we can ascertain the exact correlational pattern by computing \eqref{pearson-1} for the joint distribution of $(x, j)$ constructed via the functions $(H_j)_j$ and the marginal distribution $\Gamma$.

To circumvent the linearity restriction, one might opt for measuring the rank correlation between $x$ and $j$, which is particularly natural when $x$ and $j$ are ordinal variables \citep{kendall1955rank}. Intuitively, the rank of a variable is preserved under any monotone transformation, thereby allowing for more robust detection. For example, consider \citeauthor{spearman1904proof}'s \citeyearpar{spearman1904proof} rho, defined as
\[
    \rho_s=\frac{\mathit{Cov}(G(x),\Gamma(j))}{\sqrt{\mathit{Var}(G(x))\,\mathit{Var}(\Gamma(j))}},
\]
where the function $G$ represents the marginal distribution of $x$ and is given by
\begin{align*}
G(x)=\int_{J} G_j(x)\, d\Gamma(j) \text{ for all } x\in\mathbb{R}.  
\end{align*}
Since the rank of $x$ within the population remains unchanged under any profile $(\psi_j)_j$ of strictly increasing transformations that are identical across $j$, so does the value of Spearman's rho. A similar observation holds for another rank-based correlation measure, \citeauthor{kendall1955rank}'s \citeyearpar{kendall1955rank} tau, which is defined as
\[
    \rho_{\tau}=
    \mathbb{E}\left[\mathbbm{1}_{\{(x-x')(j-j')>0\}}\right]-\mathbb{E}\left[\mathbbm{1}_{\{(x-x')(j-j')<0\}}\right],
\]
where $(x', j')$ is distributed independently of $(x, j)$ but with the same joint distribution. Consequently, the rank correlation patterns for our variables of interest are detected under a fairly general class of chronometric functions, whenever they hold for the joint distribution induced by the appropriate functions $(H_j)_j$ and $\Gamma$.\footnote{The invariance property shared by Spearman's rho and Kendall's tau is related to the fact that they both depend only on the bivariate Copula of the two random variables, which is invariant to monotone transformations; see, e.g., \cite{fan2014copulas} and \cite{haugh2016}. The population versions of Spearman's rho and Kendall's tau that we adopt here were taken from \cite{haugh2016}.}

\end{appendix}

\end{document}